\newtheorem{definition}{Definition}
\newtheorem{proposition}[definition]{Proposition}
\newtheorem{Lemma}[definition]{Lemma}
\newtheorem{Theorem}[definition]{Theorem}
\newtheorem{Corollary}[definition]{Corollary}
\newtheorem{conjecture}[definition]{Conjecture}
\newtheorem{remark}[definition]{Remark}
\newtheorem{example}[definition]{Example}
\newtheorem{question}[definition]{Question}
\def\squareforqed{\hbox{\rlap{$\sqcap$}$\sqcup$}}
\def\qed{\ifmmode\squareforqed\else{\unskip\nobreak\hfil
		\penalty50\hskip1em\null\nobreak\hfil\squareforqed
		\parfillskip=0pt\finalhyphendemerits=0\endgraf}\fi}
\def\endenv{\ifmmode\;\else{\unskip\nobreak\hfil
		\penalty50\hskip1em\null\nobreak\hfil\;
		\parfillskip=0pt\finalhyphendemerits=0\endgraf}\fi}
\newenvironment{proof}{\noindent \textbf{{Proof.~} }}{\qed}
\def\Dbar{\leavevmode\lower.6ex\hbox to 0pt
	{\hskip-.23ex\accent"16\hss}D}
\def\url@leostyle{%
	\@ifundefined{selectfont}{\def\UrlFont{\sf}}{\def\UrlFont{\small\ttfamily}}}
\def\bcj{\begin{conjecture}}
	\def\ecj{\end{conjecture}}
\def\bcr{\begin{corollary}}
	\def\ecr{\end{corollary}}
\def\bd{\begin{definition}}
	\def\ed{\end{definition}}
\def\bea{\begin{eqnarray}}
\def\eea{\end{eqnarray}}
\def\bem{\begin{enumerate}}
	\def\eem{\end{enumerate}}
\def\bex{\begin{example}}
	\def\eex{\end{example}}
\def\bim{\begin{itemize}}
	\def\eim{\end{itemize}}
\def\bl{\begin{lemma}}
	\def\el{\end{lemma}}
\def\bma{\begin{bmatrix}}
	\def\ema{\end{bmatrix}}
\def\bpf{\begin{proof}}
	\def\epf{\end{proof}}
\def\bpp{\begin{proposition}}
	\def\epp{\end{proposition}}
\def\bqu{\begin{question}}
	\def\equ{\end{question}}
\def\br{\begin{remark}}
	\def\er{\end{remark}}
\def\bt{\begin{theorem}}
	\def\et{\end{theorem}}
\def\btb{\begin{tabular}}
	\def\etb{\end{tabular}}
\newcommand{\nc}{\newcommand}
\nc{\bbA}{\mathbb{A}} \nc{\bbB}{\mathbb{B}} \nc{\bbC}{\mathbb{C}}
\nc{\bbD}{\mathbb{D}} \nc{\bbE}{\mathbb{E}} \nc{\bbF}{\mathbb{F}}
\nc{\bbG}{\mathbb{G}} \nc{\bbH}{\mathbb{H}} \nc{\bbI}{\mathbb{I}}
\nc{\bbJ}{\mathbb{J}} \nc{\bbK}{\mathbb{K}} \nc{\bbL}{\mathbb{L}}
\nc{\bbM}{\mathbb{M}} \nc{\bbN}{\mathbb{N}} \nc{\bbO}{\mathbb{O}}
\nc{\bbP}{\mathbb{P}} \nc{\bbQ}{\mathbb{Q}} \nc{\bbR}{\mathbb{R}}
\nc{\bbS}{\mathbb{S}} \nc{\bbT}{\mathbb{T}} \nc{\bbU}{\mathbb{U}}
\nc{\bbV}{\mathbb{V}} \nc{\bbW}{\mathbb{W}} \nc{\bbX}{\mathbb{X}}
\nc{\bbZ}{\mathbb{Z}}
\nc{\bA}{{\bf A}} \nc{\bB}{{\bf B}} \nc{\bC}{{\bf C}}
\nc{\bD}{{\bf D}} \nc{\bE}{{\bf E}} \nc{\bF}{{\bf F}}
\nc{\bG}{{\bf G}} \nc{\bH}{{\bf H}} \nc{\bI}{{\bf I}}
\nc{\bJ}{{\bf J}} \nc{\bK}{{\bf K}} \nc{\bL}{{\bf L}}
\nc{\bM}{{\bf M}} \nc{\bN}{{\bf N}} \nc{\bO}{{\bf O}}
\nc{\bP}{{\bf P}} \nc{\bQ}{{\bf Q}} \nc{\bR}{{\bf R}}
\nc{\bS}{{\bf S}} \nc{\bT}{{\bf T}} \nc{\bU}{{\bf U}}
\nc{\bV}{{\bf V}} \nc{\bW}{{\bf W}} \nc{\bX}{{\bf X}}
\nc{\bZ}{{\bf Z}}
\nc{\cA}{{\cal A}} \nc{\cB}{{\cal B}} \nc{\cC}{{\cal C}}
\nc{\cD}{{\cal D}} \nc{\cE}{{\cal E}} \nc{\cF}{{\cal F}}
\nc{\cG}{{\cal G}} \nc{\cH}{{\cal H}} \nc{\cI}{{\cal I}}
\nc{\cJ}{{\cal J}} \nc{\cK}{{\cal K}} \nc{\cL}{{\cal L}}
\nc{\cM}{{\cal M}} \nc{\cN}{{\cal N}} \nc{\cO}{{\cal O}}
\nc{\cP}{{\cal P}} \nc{\cQ}{{\cal Q}} \nc{\cR}{{\cal R}}
\nc{\cS}{{\cal S}} \nc{\cT}{{\cal T}} \nc{\cU}{{\cal U}}
\nc{\cV}{{\cal V}} \nc{\cW}{{\cal W}} \nc{\cX}{{\cal X}}
\nc{\cZ}{{\cal Z}}
\nc{\hA}{{\hat{A}}} \nc{\hB}{{\hat{B}}} \nc{\hC}{{\hat{C}}}
\nc{\hD}{{\hat{D}}} \nc{\hE}{{\hat{E}}} \nc{\hF}{{\hat{F}}}
\nc{\hG}{{\hat{G}}} \nc{\hH}{{\hat{H}}} \nc{\hI}{{\hat{I}}}
\nc{\hJ}{{\hat{J}}} \nc{\hK}{{\hat{K}}} \nc{\hL}{{\hat{L}}}
\nc{\hM}{{\hat{M}}} \nc{\hN}{{\hat{N}}} \nc{\hO}{{\hat{O}}}
\nc{\hP}{{\hat{P}}} \nc{\hR}{{\hat{R}}} \nc{\hS}{{\hat{S}}}
\nc{\hT}{{\hat{T}}} \nc{\hU}{{\hat{U}}} \nc{\hV}{{\hat{V}}}
\nc{\hW}{{\hat{W}}} \nc{\hX}{{\hat{X}}} \nc{\hZ}{{\hat{Z}}}
\nc{\hn}{{\hat{n}}}
\def\max{\mathop{\rm max}}
\def\min{\mathop{\rm min}}
\def\tr{\mathop{\rm Tr}}
\newcommand{\bra}[1]{\langle#1|}
\newcommand{\ket}[1]{|#1\rangle}
\def\Dbar{\leavevmode\lower.6ex\hbox to 0pt
	{\hskip-.23ex\accent"16\hss}D}
\begin{document}
	\title{Monogamy relations for the generalized W class states beyond qubits}
	
	\author{Xian Shi}\email[]
	{shixian01@buaa.edu.cn}
	\affiliation{School of Mathematical Sciences, Beihang University, Beijing 100191, China}

	%\author{Yi Shen}\email[]
	%{yishen@buaa.edu.cn}
	%\affiliation{School of Mathematics and Systems Science, Beihang University, Beijing 100191, China}
	%
	%\author{Yize Sun}
	%\affiliation{School of Mathematics and Systems Science, Beihang University, Beijing 100191, China}
	
	%\author{Lijun Zhao}
	%\affiliation{School of Mathematics and Systems Science, Beihang University, Beijing 100191, China}
	
	%\author{Yumin Guo}
	%\affiliation{School of Mathematical Sciences, Capital Normal University, Beijing 100048, China}
	
	\date{\today}
	
	\pacs{03.65.Ud, 03.67.Mn}

\begin{abstract}
	\indent In this article, we consider the monogamy relations for the generalized W class states. Here we first present an analytical formula on Tsallis-$q$ entanglement and Tsallis-$q$ entanglement of assistance of a reduced density matrix of the generalized W class states, then we present a monogamy relation in terms of the  T$q$EE for the GW state, and we also present generalized monogamy relations in terms of T$q$EE for the GW states. At last, we present tighter polygamy relations in terms of T$q$EEoA when $q=2$ for the GWV states.
\end{abstract}

\maketitle
\section{introduction}
\indent Quantum entanglement \cite{horodecki2009quantum} is an essential feature of quantum information theory, which distinguishes the quantum from classical theory. One of the fundamental differences between entanglement and classical relations is that there eixsts some restrictions on its distribution and sharability \cite{coffman2000distributed,terhal2004entanglement}. This property is known as monogamy of entanglement (MoE).  Monogamy relations is valueable on the frustration effects observed in condensed matter physics \cite{ma2011quantum}. MoE is also a key ingredient to make quantum cryptography secure as it quantifies how much information an eavesdropper could potentially obtain about the secret key to be extracted \cite{masanes2009universally,renes2006generalized}.\\
\indent Mathematically, MoE can be represented as in terms of some entanglement measure $\mathcal{E}$ for a three-party system $\rho_{ABC}$
\begin{align}
\mathcal{E}_{A|BC}\ge\mathcal{E}_{AB}+\mathcal{E}_{AC}.
\end{align} 
This property was first shown by Coffman $et$ $al.$ \cite{coffman2000distributed} in terms of the squared concurrence for a three-qubit mixed state $\rho_{ABC},$ here we can denote this inequality as CKW inequality. It was generated for $n$-qubit systems in terms of the squared concurrence later \cite{osborne2006general}.  Then this relation is generalized in terms of the T$q$EE \cite{san2010tsallis, luo2016general}, the Renyi-$\alpha$ entropy \cite{kim2010monogamy}, and the unified entropy \cite{san2011unified} for multi-qubit systems. In 2014, Regula $et$ $al.$  conjectured the nonnegativity of n-tangle when $n\ge 3$, thus they proposed a stronger monogamy inequality   \cite{regula2014strong}. Recently, the generalized monogamy relations are presented for $n$-qubit systems \cite{zhu2015,san2016generalized,jin2018tighter,Zhu2018}.\\
\indent  However, the CKW inequality is invalid for higher dimensional systems in terms of the squared concurrence \cite{ou2007violation}. In 2016, Lancien $et$ $al.$ even showed there exists multi-partite higher dimensional systems donot satisfy any nontrivial monogamy relations in terms of a whole additive entanglement measures \cite{lancien2016should}. Up to date, it seems only one known entanglement measure, the squashed entanglement, is monogamous for arbitrary dimensional systems \cite{christandl2004squashed}.  And there are results on some set of states satisfying the monogamy relations in higher dimensional systems. In 2008, Kim and Sanders showed the generalized W class (GW) states satisfying the monogamy inequality in terms of the squared concurrence \cite{san2008generalized}. In 2015, Choi and Kim showed that the superposition of the generalized W-class states and the vacuum (GWV) states satisfy the strong monogamy inequality in terms of the squared convex roof extended negativity \cite{choi2015negativity}. In 2016, Kim showed that a partially coherent superposition (PCS) of a generalized W-class
state and the vacuum saturates the strong monogamy inequality  this result is interesting, as it is the first class of the mixed states in multipartite higher dimensional systems that satisfy the strong monogamy inequality in terms of the squared convex roof extended negativity \cite{san2016strong},.\\
\indent As a generalization of von Neumann entropy, Tsallis-q entropy plays an important role in quantum information theory. It can be used to provide criterion for separability of compound quantum systems \cite{rossignoli2002generalized,nayak2017biseparability}, and it was used to generalize the global quantum discord  \cite{chi2013generalized}, there provides a sufficient condition for an n-party quantum state to be monogamous. Furthermore, M. Wajs $et$ $al.$ showed that the entropic bell inequalities in terms of the classical Tsallis-q entropy can be used to investigate the nonlocal corrections which is more suitable than the Shannon entropy \cite{wajs2015information}. \\
\indent In this article, we consider the monogamy relations in terms of T$q$EE for the GW states. In section \uppercase\expandafter{\romannumeral2}, we present some preliminary knowledge on this article. In section \uppercase\expandafter{\romannumeral3}, we present our main results. In section \uppercase\expandafter{\romannumeral4}, we end with a summary.
\section{Preliminary Knowledge}
\indent Given a bipartite pure state $\ket{\psi}_{AB}=\sum_i\sqrt{\lambda_i}\ket{ii},$
the concurrence is defined as
\begin{align}
C(\ket{\psi}_{AB})=\sqrt{2(1-\tr\rho_A^2)}=\sqrt{2\sum_{i\ne j}\lambda_i\lambda_j},\label{p1}
\end{align}
where $\rho_A=\tr_B\rho_{AB}$. When $\rho_{AB}$ is a mixed state, its concurrence is defined as
\begin{align}
C(\rho_{AB})=\min_{\{p_i,\ket{\psi_i}\}}\sum_i p_iC(\ket{\psi_i}),
\end{align}
where the minimum takes over all the decompositions of $\rho_{AB}=\sum_i p_i\ket{\psi_i}_{AB}\bra{\psi_i}.$ As a dual quantity to the concurrence, we can define the concurrence of assistance (CoA) as 
\begin{align}
C_a(\rho_{AB})=\max_{\{p_i,\ket{\psi_i}\}}\sum_i p_iC(\ket{\psi_i}),
\end{align}
where the maximum takes over all the decompositions of $\rho_{AB}=\sum_i p_i\ket{\psi_i}_{AB}\bra{\psi_i}.$\\
\indent Assume $\ket{\psi}_{AB}$ is a pure state, its negativity is defined as
\begin{align}
N(\ket{\psi}_{AB})=||\ket{\psi}_{AB}\bra{\psi}||_1-1,\label{p2}
\end{align}
here $||\cdot||$ is the trace-norm \cite{vidal2002computable}. 
When we denote $\ket{\psi}_{AB}=\sum_i\sqrt{\lambda_i}\ket{ii},$  The equation (\ref{p2}) can be written as $$N(\ket{\psi}_{AB})=\sum_{i\ne j}\sqrt{\lambda_i\lambda_j} =[\tr\sqrt{\rho_A}]^2-1,$$ here $\rho_A=\tr_B\ket{\psi}_{AB}\bra{\psi}$. When $\rho_{AB}$ is a mixed state, its convex roof extended negativity is defined as
\begin{align}
\mathcal{N}(\rho_{AB})=\min_{\{p_i,\ket{\psi_i}\}}\sum_i p_iN(\ket{\psi_i}),
\end{align}
where the minimum takes over all the decompositions of $\rho_{AB}$ \cite{choi2015negativity}.\\
\indent For a pure state $\ket{\psi}_{AB}=\sum_i\sqrt{\lambda_i}\ket{ii},$ its Tsallis-$q$ entanglement (T$q$EE) is defined as
\begin{align}
T_q(\ket{\psi}_{AB})=\frac{1-\tr\rho_A^q}{q-1}=\frac{1-\sum_i \lambda_i^q}{q-1},\label{p3}
\end{align} 
for any $q>0,q\ne 1,$ here we denote that $\rho_A=\tr_B\ket{\psi}\bra{\psi}.$
Assume $\rho_{AB}$ is a mixed state, its T$q$EE is defined as \cite{san2010tsallis}
\begin{align}
T_q(\rho_{AB})=\min_{\{p_i,\ket{\psi_i}\}}\sum_i p_iT_q(\ket{\psi_i}),
\end{align}
where the minimum takes over all the decompositions of $\rho_{AB}.$ When $q\rightarrow 1,$ $T_q(\cdot)$ converges to the entanglement of formation $E(\cdot). $ As a dual concept of T$q$EE, its Tsallis-$q$ entanglement of assistance (T$q$EEoA) is defined as \cite{san2010tsallis}
\begin{align}
T_q^a(\rho_{AB})=\max_{\{p_i,\ket{\psi_i}\}}\sum_i p_iT_q(\ket{\psi_i}),
\end{align} where the minimum takes over all the decompositions of $\rho_{AB}.$\\
\indent From the equalities $(\ref{p1})$ and $(\ref{p3})$, we see that when $\ket{\psi}_{AB}=\sqrt{\lambda_0}\ket{00}+\sqrt{\lambda_1}\ket{11},$ $  C^2(\ket{\psi}_{AB})=4\lambda_0\lambda_1,
T_q(\ket{\psi}_{AB})=\frac{1-\lambda_0^q-\lambda_1^q}{q-1},$ from the above equalities, we have   
\begin{align}\label{Tc}
T_q(\ket{\psi}_{AB})=f_q(C^2(\ket{\psi}_{AB})),
\end{align}  
here
 \begin{align}\label{fq}
 f_q(x)=\frac{1}{q-1}[1-(\frac{1+\sqrt{1-x}}{2})^q-(\frac{1-\sqrt{1-x}}{2})^q].
 \end{align}
\par
Next we present some lemmas on the properties of the function $f_q$ in the equality $(\ref*{fq})$.
\begin{Lemma}\label{l1}\cite{luo2016general}
	The function $f_q^2(x)$ is a monotonously increasing and convex function when $q\in[\frac{5-\sqrt{13}}{2},\frac{5+\sqrt{13}}{2}].$  
\end{Lemma}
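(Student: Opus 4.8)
The plan is to regard $f_q^2$ as a function of the single variable $x\in[0,1]$ (the range of $C^2$), to obtain monotonicity directly from $f_q\ge 0$ together with $f_q'\ge 0$, and then to settle convexity via the second–derivative test, reducing $(f_q^2)''\ge 0$ to a one–parameter inequality whose validity turns out to be governed by the quadratic $q^2-5q+3\le 0$, whose roots are exactly $\frac{5\pm\sqrt{13}}{2}$.

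First I would substitute $t=\sqrt{1-x}\in[0,1]$ and set $a=\frac{1+t}{2}$, $b=\frac{1-t}{2}$, so that $a+b=1$, $a-b=t$, $ab=\frac{x}{4}$ and $f_q=\frac{1-a^q-b^q}{q-1}$. Using $\frac{da}{dx}=-\frac{1}{4t}$ and $\frac{db}{dx}=\frac{1}{4t}$ one gets
\begin{align}
f_q'(x)=\frac{q}{4(q-1)\sqrt{1-x}}\,(a^{q-1}-b^{q-1}).
\end{align}
Because $a\ge b>0$, the quotient $\frac{a^{q-1}-b^{q-1}}{q-1}$ is nonnegative for every $q>0$ with $q\ne 1$ (for $q>1$ both the numerator and $q-1$ are positive, and for $q<1$ both reverse sign), so $f_q'\ge 0$. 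Since also $f_q(0)=0$, this gives $f_q\ge 0$ on $[0,1]$, hence $(f_q^2)'=2f_qf_q'\ge 0$, which already proves that $f_q^2$ is monotonically increasing for all admissible $q$.

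For convexity I would use $(f_q^2)''=2(f_q')^2+2f_qf_q''$, so it suffices to show $(f_q')^2+f_qf_q''\ge 0$. Differentiating $f_q'$ once more in the $(a,b)$ variables and pulling out the positive factor $\frac{q}{16(q-1)^2(1-x)^{3/2}}$, the inequality $(f_q^2)''\ge 0$ becomes equivalent to the nonnegativity, for all $t\in(0,1)$, of
\begin{align}
B(t)=&\,qt\,(a^{q-1}-b^{q-1})^2\notag\\
&+(1-a^q-b^q)\big[(3-q)(a^{q-1}-b^{q-1})+(q-1)ab\,(a^{q-3}-b^{q-3})\big].
\end{align}
Writing $r=b/a\in[0,1]$ (equivalently $a=\frac{1}{1+r}$, $b=\frac{r}{1+r}$) and removing the positive factor $(1+r)^{-(2q-1)}$ turns this into a single inequality in $r$ parametrized by $q$; the endpoints $r=0$ and $r=1$ are checked directly, and the claim is that the minimum of the reduced expression over $r\in[0,1]$ remains nonnegative exactly when $q^2-5q+3\le 0$, i.e.\ $q\in[\frac{5-\sqrt{13}}{2},\frac{5+\sqrt{13}}{2}]$.

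The hard part is precisely this last step. A Taylor analysis at the two endpoints $r\to 1$ (where $a,b\to\tfrac12$) and $r\to 0$ leaves transcendental factors such as $2^{1-q}$ and does not by itself reproduce the clean threshold, so the binding constraint is attained in the interior of $(0,1)$. I expect to locate it through the double–root condition, solving for the worst–case $r$ where the reduced expression and its $r$–derivative vanish simultaneously; my expectation is that this coincidence is what cancels the transcendental dependence and leaves the polynomial relation $q^2-5q+3=0$ as the boundary of convexity, while outside this range one exhibits an explicit $r$ with $B(r)<0$ to confirm sharpness. Controlling the sign of $B$ uniformly in $r$, including the delicate subleading behaviour as $r\to 0$ where the negative powers $a^{q-3}-b^{q-3}$ blow up for $q<3$, is the main technical obstacle.
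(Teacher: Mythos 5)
Your monotonicity argument is sound and complete: the computation $f_q'(x)=\frac{q}{4(q-1)\sqrt{1-x}}(a^{q-1}-b^{q-1})$ is correct, the sign analysis of $\frac{a^{q-1}-b^{q-1}}{q-1}$ works for all $q>0$, $q\ne1$, and together with $f_q(0)=0$ it gives $(f_q^2)'=2f_qf_q'\ge0$. Note, however, that the paper itself offers no proof of this lemma at all --- it is imported verbatim from the cited reference \cite{luo2016general} --- so the comparison here is between your attempt and a complete argument that would have to live in that reference.

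The convexity half of your proposal has a genuine gap: after reducing $(f_q^2)''\ge0$ to the nonnegativity of $B(t)$, you do not actually establish that $B(t)\ge0$ on $(0,1)$ for $q\in[\frac{5-\sqrt{13}}{2},\frac{5+\sqrt{13}}{2}]$. You state that you ``expect'' the binding constraint to sit at an interior double root and that this ``coincidence'' should cancel the transcendental factors and leave $q^2-5q+3\le0$ as the boundary --- but an expectation about where the minimum lies, and a hope that the transcendental dependence cancels, is precisely the content of the lemma, not a proof of it. Identifying that the endpoints $\frac{5\pm\sqrt{13}}{2}$ are the roots of $q^2-5q+3=0$ is a useful observation, but nothing in your write-up derives that quadratic from $B$; the delicate behaviour you yourself flag (the blow-up of $a^{q-3}-b^{q-3}$ as $r\to0$ for $q<3$, and the $2^{1-q}$ factors near $r=1$) is left uncontrolled. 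As it stands the argument proves monotonicity for all $q$ but proves convexity for no $q$; to close it you would need either an explicit sign analysis of the reduced expression in $r$ (e.g.\ a further monotonicity-in-$q$ or monotonicity-in-$r$ reduction that pins the minimum at a computable point), or a different route such as verifying convexity of $f_q$ composed with an auxiliary convex increasing function, which is the style of argument used in the cited literature.
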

\begin{Lemma}\label{l2}\cite{luo2016general}
	The function $f_q(x)$ is a monotonously increasing and concave function when $q\in[\frac{5-\sqrt{13}}{2},2]\cup[3,\frac{5+\sqrt{13}}{2}].$
\end{Lemma}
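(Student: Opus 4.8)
The plan is to reduce everything to a single‑variable inequality through the substitution $a=\frac{1+\sqrt{1-x}}{2}$, $b=\frac{1-\sqrt{1-x}}{2}$, so that $a+b=1$, $ab=\frac{x}{4}$, and $y:=a-b=\sqrt{1-x}$, while $f_q(x)=\frac{1-a^q-b^q}{q-1}$. Writing $S_q:=a^q+b^q$, I would first dispatch monotonicity: since $a'=-b'=-\frac{1}{4y}$ one gets $f_q'(x)=\frac{-S_q'}{q-1}$ with $S_q'=-\frac{q}{4y}(a^{q-1}-b^{q-1})$, so that $f_q'(x)=\frac{q}{4y}\cdot\frac{a^{q-1}-b^{q-1}}{q-1}$. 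Because $a>b>0$, the sign of $a^{q-1}-b^{q-1}$ agrees with the sign of $q-1$, hence the ratio $\frac{a^{q-1}-b^{q-1}}{q-1}$ is positive for every $q\neq1$ and $f_q'>0$ on $(0,1)$. Thus the monotone increase is immediate and in fact holds for all admissible $q$, not only on the stated range.

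For concavity the target is $f_q''(x)\le0$, equivalently $\frac{S_q''}{q-1}\ge0$. Differentiating once more I expect to obtain $S_q''=\frac{q}{16y^3}\,\Phi$, where $\Phi=(q-1)(a-b)(a^{q-2}+b^{q-2})-2(a^{q-1}-b^{q-1})$; expanding and using $a-b=y$ rewrites this as $\Phi=(q-3)(a^{q-1}-b^{q-1})+(q-1)(ab^{q-2}-ba^{q-2})$. Since $q$ and $y$ are positive, concavity is therefore equivalent to $\frac{\Phi}{q-1}\ge0$. The decisive move is to set $t:=b/a\in(0,1)$ and factor out $a^{q-1}$, which collapses the expression to $\frac{\Phi}{q-1}=a^{q-1}\,\Psi(t)$ with $\Psi(t)=\frac{q-3}{q-1}(1-t^{q-1})+t^{q-2}-t$. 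The whole statement then reduces to proving $\Psi(t)\ge0$ on $(0,1)$.

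The heart of the argument is that $t=1$ is a double zero of $\Psi$: a direct check gives $\Psi(1)=0$ and $\Psi'(1)=-(q-3)+(q-2)-1=0$, while $\Psi''(t)=(q-2)(q-3)\,t^{q-4}(1-t)$. On the stated ranges $q\in[\frac{5-\sqrt{13}}{2},2]\cup[3,\frac{5+\sqrt{13}}{2}]$ one has $(q-2)(q-3)\ge0$, so $\Psi''\ge0$ on $(0,1)$ and $\Psi$ is convex there. A convex $\Psi$ with $\Psi'(1)=0$ has $\Psi'\le0$ throughout $(0,1)$, since its derivative is nondecreasing and vanishes at the right endpoint; hence $\Psi$ is nonincreasing and $\Psi(t)\ge\Psi(1)=0$. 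This yields $f_q''\le0$ and finishes the proof, with the value $q=1$ handled by continuity, where $f_q\to-a\ln a-b\ln b$, which is manifestly increasing and concave in $x$.

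I expect the only genuinely delicate point to be the bookkeeping from $S_q''$ to $\Phi$ and then spotting the substitution $t=b/a$ that exposes the clean $\Psi$ with its double zero at $t=1$; once that structure is in hand, the sign of $\Psi''$ is governed entirely by the quadratic factor $(q-2)(q-3)$. It is worth recording that this factor, rather than the outer endpoints $\frac{5\pm\sqrt{13}}{2}$, is what forces the gap $(2,3)$ to be excluded: for $q\in(2,3)$ the same computation gives $\Psi\le0$, i.e.\ $f_q$ is convex instead of concave. The outer bounds coincide with the interval of Lemma \ref{l1}, on which the companion function $f_q^2$ is simultaneously convex, so the range in the statement is precisely the overlap required for the monogamy estimates that follow.
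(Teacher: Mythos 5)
The paper does not actually prove this lemma: it is quoted verbatim from Luo and Li (Ref.\ \cite{luo2016general}) and used as a black box, so there is no in-text argument to compare against. Your derivation is a complete and correct self-contained proof, and I checked the computations: with $a=\frac{1+\sqrt{1-x}}{2}$, $b=\frac{1-\sqrt{1-x}}{2}$, $y=a-b$ one indeed gets $S_q''=\frac{q}{16y^3}\bigl[(q-1)(a-b)(a^{q-2}+b^{q-2})-2(a^{q-1}-b^{q-1})\bigr]$, the substitution $t=b/a$ gives $\frac{\Phi}{q-1}=a^{q-1}\Psi(t)$ with $\Psi(t)=\frac{q-3}{q-1}(1-t^{q-1})+t^{q-2}-t$, and $\Psi(1)=\Psi'(1)=0$ with $\Psi''(t)=(q-2)(q-3)t^{q-4}(1-t)$, so the sign analysis via convexity of $\Psi$ and the vanishing derivative at the right endpoint is sound. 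Your argument in fact yields the stronger statement that $f_q$ is increasing for all $q>0$, $q\ne 1$, and concave for all $q\in(0,2]\cup[3,\infty)$, which correctly identifies the factor $(q-2)(q-3)$ (rather than the endpoints $\frac{5\pm\sqrt{13}}{2}$) as the reason the gap $(2,3)$ is excluded; the outer endpoints are only needed for the companion Lemma \ref{l1} on $f_q^2$. Two cosmetic remarks: the lemma's hypothesis already excludes $q=1$ (the paper defines $f_q$ only for $q\ne1$), so the continuity remark is optional; and at $q=2,3$ your $\Psi\equiv0$ correctly reflects that $f_2(x)=x/2$ and $f_3(x)=3x/8$ are linear, hence only weakly concave.
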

Next we denote $g_q(y)=f_q(x^2)$.
\begin{Lemma}\label{l3}\cite{luo2016general}
	The function $g_q(y)$ is a monotonic increasing function of the variable $x$ for any $q>0$ and $0<x<1,$ it is a convex function of $x$ when $q\in[\frac{5-\sqrt{13}}{2},\frac{5+\sqrt{13}}{2}].$
\end{Lemma}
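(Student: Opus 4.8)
The plan is to read the statement with $x$ as the working variable, i.e. to treat $g_q$ as the composition of $f_q$ with the squaring map, $g_q(x)=f_q(x^2)$ on $x\in(0,1)$ (so $x$ plays the role of the concurrence and $x^2$ that of its square), and to reduce both assertions to sign conditions on the derivatives of $f_q$. For monotonicity I would differentiate directly, $g_q'(x)=2x\,f_q'(x^2)$, so it suffices to verify $f_q'>0$ on $(0,1)$ for every $q>0$. Writing the two Schmidt eigenvalues as $a=\frac{1+\sqrt{1-u}}{2}$ and $b=\frac{1-\sqrt{1-u}}{2}$ (so $a+b=1$, $u=4ab$, and $\mu:=a-b=\sqrt{1-u}$), the chain rule gives $f_q'(u)=\frac{q}{4\mu(q-1)}\bigl(a^{q-1}-b^{q-1}\bigr)$, which is strictly positive for all $q>0$: since $a>b$, the factors $a^{q-1}-b^{q-1}$ and $q-1$ always share the same sign. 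Because $x>0$, this yields $g_q'(x)>0$ and hence monotonicity for every $q>0$, with no restriction on $q$, matching the first half of the claim.

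For convexity I would compute $g_q''(x)=2f_q'(x^2)+4x^2f_q''(x^2)$, so the statement is equivalent to $f_q'(u)+2u\,f_q''(u)\ge0$ on $(0,1)$. In general the two terms compete — on the part of the range where $f_q$ is concave (Lemma~\ref{l2}) one has $f_q'>0$ but $f_q''<0$ — so convexity of $g_q$ is not automatic and the positive term must dominate. Inserting the explicit $f_q'$ above together with the analogous expression for $f_q''$, using $u=4ab$ and $\mu=a-b$, and clearing the strictly positive prefactors, the inequality reduces to the clean algebraic statement
\[
\frac{a^{q-1}-b^{q-1}}{2(q-1)}\ \ge\ (a-b)\,ab\,\bigl(a^{q-2}+b^{q-2}\bigr),\qquad a+b=1,\ a>b>0,
\]
which is sign-equivalent to the desired $g_q''\ge0$ for all $q>0$ (for instance at $q=2$ it collapses to $ab\le\tfrac14$, which holds by $a+b=1$ with equality exactly at $a=b$).

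Proving this last inequality over the whole interval is the heart of the argument and the step I expect to be the main obstacle. I would first pass to a single variable, say $r=b/a\in(0,1)$, to obtain one inequality in $r$ parametrized by $q$. The binding case is the equal-weight limit $a,b\to\tfrac12$ (equivalently $u\to1$, $x\to1$): writing $a=\tfrac12+\epsilon$, $b=\tfrac12-\epsilon$ and expanding in $\epsilon$, the two sides agree to order $\epsilon$ and the leading correction of their difference is proportional to $-(q^2-5q+3)\,\epsilon^3$. Hence the inequality can persist near this limit precisely when $q^2-5q+3\le0$, i.e. $q\in[\frac{5-\sqrt{13}}{2},\frac{5+\sqrt{13}}{2}]$, which is exactly the asserted range and explains the two endpoints as the roots of $q^2-5q+3$. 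What then remains — and this is the genuinely delicate part — is to upgrade this boundary computation to a global one: to show that $\epsilon\to0$ is the tightest configuration, so that within the admissible range the inequality holds for all $\epsilon\in(0,\tfrac12)$. I would attempt this either by establishing monotonicity of the suitably normalized difference of the two sides in $\mu$, or by a direct sign analysis of the reduced one-variable function; the reductions leading up to it are routine, but this global control is where the real work lies.
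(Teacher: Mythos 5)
Your monotonicity argument is complete and correct: the formula $f_q'(u)=\frac{q}{4\mu(q-1)}\bigl(a^{q-1}-b^{q-1}\bigr)$ checks out, the sign analysis ($a^{q-1}-b^{q-1}$ and $q-1$ always sharing a sign) is sound, and $g_q'(x)=2xf_q'(x^2)>0$ follows for every $q>0$. Your reduction of convexity to $f_q'(u)+2u f_q''(u)\ge 0$ and then to
\[
\frac{a^{q-1}-b^{q-1}}{2(q-1)}\ \ge\ (a-b)\,ab\,\bigl(a^{q-2}+b^{q-2}\bigr),\qquad a+b=1,\ a>b>0,
\]
is also correct (I verified the algebra, including the $q=2$ collapse to $ab\le\tfrac14$), and your expansion at $a=b=\tfrac12$ really does produce a leading correction proportional to $-(q^2-5q+3)\epsilon^3$, which cleanly explains the endpoints $\frac{5\pm\sqrt{13}}{2}$. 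Note that the paper itself gives no proof of this lemma --- it is imported verbatim from the cited reference --- so there is no in-paper argument to measure you against; your route is the natural direct one.

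The genuine gap is the one you yourself flag: the convexity claim is only \emph{verified} at $q=2$ and in an infinitesimal neighborhood of the equal-weight configuration, and the perturbative computation at $\epsilon\to 0$ establishes only that the stated $q$-interval is \emph{necessary}, not that it is sufficient. Nothing in the proposal rules out the reduced inequality failing at some intermediate $a>b$ for some admissible $q$; the $O(\epsilon^5)$ terms are uncontrolled, and the interval contains regimes ($q<1$, where $q-1<0$ and the left side must be read with care, and $q>3$, where $a^{q-2}+b^{q-2}$ changes its qualitative behavior) that would each need to be handled in a uniform global argument. Since the convexity assertion is the entire substance of the second half of the lemma --- it is what powers Theorem~\ref{T2} downstream --- the proof is not complete until you either carry out the one-variable sign analysis in $r=b/a$ (or $\mu=a-b$) over the whole range, or explicitly defer to the proof in the cited reference. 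As written, the proposal is a correct and well-organized reduction plus a sharpness check, not yet a proof of the convexity statement.
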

\par\indent Now let us recall the definition of the GW states $\ket{W_n^d}$,
\begin{align}
\ket{W_n^d}_{A_1\cdots A_n}=\sum_{i=1}^{d}(a_{1i}\ket{i0\cdots 0}+\cdots+a_{ni}\ket{00\cdots i}),\label{p5}
\end{align} 
where we assume $\sum_{i=1}^{d}\sum_{j=1}^{n}|a_{ji}|^2=1.$\\
\indent
Then we present a lemma proved by Choi and Kim \cite{choi2015negativity}.
\begin{Lemma} \label{CK} \cite{choi2015negativity}
	Let $\ket{\psi}_{A_1\cdots A_n}$ be the superposition of the generalized W class states and vacuum (GWV), that is,
	\begin{align}
	\ket{\psi}_{A_1\cdots A_n}=\sqrt{p}\ket{W_n^d}+\sqrt{1-p}\ket{00\cdots 0},
	\end{align}
	for $0\le p\le 1.$ Let $\rho_{A_{j_1}\cdots A_{j_m}} $ be a reduced density matrix of $\ket{\psi}_{A_1\cdots A_n}$ onto m-qudit subsystems $A_{j_1}\cdots A_{j_{m-1}}$ with $2\le m\le n-1.$ For any pure state decomposition of $\rho_{A_{j_1}\cdots A_{j_m}}$ such that 
	\begin{align}
	\rho_{A_{j_1}\cdots A_{j_m}}=\sum_k q_k\ket{\phi_k}_{A_{j_1}\cdots A_{j_m}}\bra{\phi_k}_{A_{j_1}\cdots A_{j_m}},
	\end{align} 
	$\ket{\phi_k}_{A_{j_1}\cdots A_{j_m}}$ is a GWV state.
\end{Lemma}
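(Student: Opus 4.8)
The plan is to compute the reduced state $\rho_{A_{j_1}\cdots A_{j_m}}$ explicitly across the bipartite cut separating the kept parties $B=\{A_{j_1},\dots,A_{j_m}\}$ from the traced-out parties $\bar B$, and then to pin down its range. First I would split the GW vector along this cut. Since in every term of $\ket{W_n^d}$ exactly one party carries the excitation, grouping the terms according to whether that party lies in $B$ or in $\bar B$ gives
\begin{align}
\ket{W_n^d}=\ket{W_B}\,\ket{0\cdots 0}_{\bar B}+\ket{0\cdots 0}_B\,\ket{W_{\bar B}},\notag
\end{align}
where $\ket{W_B}=\sum_{A_k\in B}\sum_{i=1}^d a_{ki}\ket{0\cdots i\cdots 0}_B$ is the (unnormalized) GW vector on $B$ and $\ket{W_{\bar B}}$ is defined analogously on $\bar B$. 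Consequently the full GWV vector becomes $\ket{\psi}=\ket{\chi}_B\,\ket{0\cdots 0}_{\bar B}+\ket{0\cdots 0}_B\,\ket{\eta}_{\bar B}$, with $\ket{\chi}_B=\sqrt{p}\,\ket{W_B}+\sqrt{1-p}\,\ket{0\cdots 0}_B$ and $\ket{\eta}_{\bar B}=\sqrt{p}\,\ket{W_{\bar B}}$.

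The key observation is that $\ket{0\cdots 0}_{\bar B}$ and $\ket{\eta}_{\bar B}$ are orthogonal, because $\ket{W_{\bar B}}$ carries no vacuum component. Hence, tracing out $\bar B$, the two off-diagonal cross terms acquire the factor $\braket{\eta_{\bar B}}{0\cdots 0_{\bar B}}=0$ and drop out, leaving
\begin{align}
\rho_{A_{j_1}\cdots A_{j_m}}=\ket{\chi}_B\bra{\chi}_B+\|\eta_{\bar B}\|^2\,\ket{0\cdots 0}_B\bra{0\cdots 0}_B.\notag
\end{align}
This operator has rank at most two, and its range is contained in the subspace $V:=\lin\{\ket{W_B},\ket{0\cdots 0}_B\}$, since both $\ket{\chi}_B$ and $\ket{0\cdots 0}_B$ lie in $V$.

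Finally I would invoke the standard fact that every vector occurring in any pure-state decomposition of a density matrix lies in its range. Thus each $\ket{\phi_k}_{A_{j_1}\cdots A_{j_m}}\in V$, so $\ket{\phi_k}=\alpha\ket{W_B}+\beta\ket{0\cdots 0}_B$ for some scalars $\alpha,\beta$. Writing $\ket{W_B}=\sqrt{N_B}\,\ket{\widetilde W}$ with $\ket{\widetilde W}$ a normalized GW state on $B$, normalizing, absorbing the relative phase into the (complex) coefficients $a_{ki}$, and discarding the global phase, one obtains $\ket{\phi_k}=\sqrt{p'}\,\ket{\widetilde W}+\sqrt{1-p'}\,\ket{0\cdots 0}_B$ with $p'=|\alpha|^2N_B/(|\alpha|^2N_B+|\beta|^2)\in[0,1]$, which is precisely a GWV state on the $m$ qudits. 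The only genuinely delicate step is the reduced-state computation: one must check carefully that the excitation-versus-vacuum orthogonality on $\bar B$ annihilates the off-diagonal blocks, thereby confining the support to the two-dimensional subspace $V$; once the support is identified, membership of each $\ket{\phi_k}$ in the GWV family is immediate.
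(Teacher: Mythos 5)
Your proof is correct. Note that the paper itself does not prove this lemma---it is imported verbatim from Choi and Kim \cite{choi2015negativity}---and your argument (splitting $\ket{W_n^d}$ across the cut $B|\bar B$, observing that the excitation part on $\bar B$ is orthogonal to the vacuum on $\bar B$ so that the cross terms vanish and the reduced state is supported on $\lin\{\ket{W_B},\ket{0\cdots 0}_B\}$, and then invoking the fact that every vector in any ensemble decomposition lies in the range of the density matrix) is essentially the original proof given there. The final phase bookkeeping is also handled correctly, since the coefficients $a_{ki}$ in the definition of a GW state are allowed to be complex, so the relative phase can be absorbed into them and only the weights $\sqrt{p'},\sqrt{1-p'}$ need to be made real.
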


As each GWV state $\ket{\psi}_{A_{j_1}A_{j_2}\cdots A_{j_i}|A_{j_{i+1}}\cdots A_{j_m}}$ is a Schmidt rank 2 pure state by any partition and from the above lemma, we see that for any decomposition $\{p_i,\ket{\phi_i}_{A_{j_1}A_{j_2}\cdots A_{j_i}A_{j_{i+1}}\cdots A_{j_m}}\}$ of a reduced density matrix $\rho_{A_{j_1}\cdots A_{j_m}} $ of $\ket{\psi}_{A_1\cdots A_n},$ $\ket{\phi_i}_{A_{j_1}A_{j_2}\cdots A_{j_i}|A_{j_{i+1}}\cdots A_{j_m}}$ is a Schmidt rank 2 pure state.
\begin{Lemma}\label{san}\cite{san2008generalized}
	For any n-qudit generalized W-class state $\ket{\psi}_{AB_{1}\cdots B_{n-1},}$ in ($\ref{p5}$) and a partition $P=\{P_1,\cdots,P_m\}$ for the set of subsystems $S=\{A, B_{1}, \cdots, B_{n-1}\},$
	\begin{align}
	C^2_{P_1\cdots \overline{P_s}\cdots P_m}=\sum_{k\ne s}C^2_{P_sP_k}=\sum_{k\ne s}(C^a_{P_sP_k})^2,\label{c1}
	\end{align}
	and \begin{align}
	C_{P_sP_k}=C^a_{P_sP_k},\label{c2}
	\end{align}
	for all $k\ne s$.
\end{Lemma}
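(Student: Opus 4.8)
The plan is to exploit the defining single-excitation structure of the GW state---in $\ket{W_n^d}$ at most one subsystem is ever raised out of $\ket 0$---and to reduce every quantity in the statement to an elementary two-qubit computation. First I would record the reduced state of an arbitrary block. Grouping a subset $T$ of the subsystems against its complement $\bar T$ and writing $\ket{W_T}=\sum_{j\in T}\sum_i a_{ji}\ket{0\cdots i\cdots 0}_T$ for the (unnormalized) single-excitation vector supported on $T$, the GW state splits as $\ket{W_n^d}=\ket{W_T}\ket 0_{\bar T}+\ket 0_T\ket{W_{\bar T}}$. Since $\ket{W_{\bar T}}$ is a sum of single-excitation terms it is orthogonal to $\ket 0_{\bar T}$, so the cross terms vanish when $\bar T$ is traced out and
\[
\rho_T=\ketbra{W_T}{W_T}+r_{\bar T}\proj 0_T,\qquad r_{\bar T}=\sum_{j\in\bar T}\sum_i|a_{ji}|^2 .
\]
Because $\ket{W_T}\perp\ket 0_T$, this is a rank-at-most-two operator supported on $\lin\{\ket{W_T},\ket 0_T\}$; writing $x_T:=\braket{W_T}{W_T}=\sum_{j\in T}\sum_i|a_{ji}|^2$, its two eigenvalues are $x_T$ and $1-x_T$. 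This is the structural input that drives everything else.

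For the pairwise terms I would take $T=P_s\cup P_k$ and split $\ket{W_T}=\ket{W_{P_s}}\ket 0_{P_k}+\ket 0_{P_s}\ket{W_{P_k}}$. Introducing effective qubits $\ket 1_{P_s}:=\ket{W_{P_s}}/\sqrt{x_s}$ and $\ket 1_{P_k}:=\ket{W_{P_k}}/\sqrt{x_k}$ (with $x_s:=\braket{W_{P_s}}{W_{P_s}}$ and $x_k:=\braket{W_{P_k}}{W_{P_k}}$), the crucial point is that no $\ket{11}$ component can occur, since a single excitation cannot sit in both blocks at once. Hence in the effective encoding $\rho_{P_sP_k}$ is the two-qubit state supported on $\lin\{\ket{00},\ket{01},\ket{10}\}$ with excitation weights $x_s,x_k$. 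A direct Wootters computation shows that $\rho_{P_sP_k}\tilde\rho_{P_sP_k}$ has a single nonzero eigenvalue $4x_sx_k$, so exactly one of the ordered square roots $\lambda_1\ge\lambda_2\ge\lambda_3\ge\lambda_4$ is nonzero, namely $\lambda_1=2\sqrt{x_sx_k}$. Both the Wootters value $C_{P_sP_k}=\max\{0,\lambda_1-\lambda_2-\lambda_3-\lambda_4\}$ and the Laustsen--Verstraete--van Enk value $C^a_{P_sP_k}=\lambda_1+\lambda_2+\lambda_3+\lambda_4$ then collapse to $2\sqrt{x_sx_k}$, which establishes $(\ref{c2})$.

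It then remains to assemble $(\ref{c1})$. Applying the structural formula with $T=P_s$ gives $\tr\rho_{P_s}^2=x_s^2+(1-x_s)^2$, hence $C^2_{P_1\cdots\overline{P_s}\cdots P_m}=2(1-\tr\rho_{P_s}^2)=4x_s(1-x_s)$. On the other hand $\sum_{k\ne s}C^2_{P_sP_k}=4x_s\sum_{k\ne s}x_k$, and because $P_1,\ldots,P_m$ partition the whole system we have $\sum_k x_k=\sum_{\text{all }j}\sum_i|a_{ji}|^2=1$, so $\sum_{k\ne s}x_k=1-x_s$ and the two expressions agree; combined with $(\ref{c2})$ this yields the full chain of equalities.

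The genuinely delicate step is the reduction in the second paragraph: a priori $\rho_{P_sP_k}$ is a state of two high-dimensional parties, and one must justify that its concurrence and concurrence of assistance may be computed by the two-qubit formulas. The justification is that the support of $\rho_{P_sP_k}$ lies inside the $2\times2$ product subspace $\lin\{\ket 0_{P_s},\ket 1_{P_s}\}\otimes\lin\{\ket 0_{P_k},\ket 1_{P_k}\}$, so every pure state appearing in any decomposition is an effective two-qubit vector on which the pure-state concurrence $\sqrt{2(1-\tr\rho^2)}$ of $(\ref{p1})$ coincides with the qubit concurrence; once this is in place, the optimizations defining $C_{P_sP_k}$ and $C^a_{P_sP_k}$ are honest two-qubit problems and the Wootters and Laustsen--Verstraete--van Enk formulas apply verbatim.
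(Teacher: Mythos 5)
The paper does not prove this lemma; it is imported verbatim from \cite{san2008generalized} as a known result, so there is no internal proof to compare against. Your reconstruction is correct and is essentially the original Kim--Sanders argument: the single-excitation structure forces $\rho_{P_sP_k}$ onto $\lin\{\ket{00},\ket{01},\ket{10}\}$ of an effective two-qubit space, whence $\rho\tilde\rho=2\sqrt{x_sx_k}\,\ketbra{\psi}{\tilde\psi}$ is rank one with $\lambda_1=2\sqrt{x_sx_k}$ and $\lambda_2=\lambda_3=\lambda_4=0$, making the Wootters and Laustsen--Verstraete--van~Enk formulas coincide, and the bookkeeping $\sum_{k\ne s}x_k=1-x_s$ closes the chain $(\ref{c1})$. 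You also correctly flag and resolve the one genuinely delicate point, namely that the two-qubit formulas are applicable because every pure state in every decomposition of $\rho_{P_sP_k}$ lives in the same $2\times2$ product subspace.
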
\par
\indent Next we present a lemma used in the last part of the results we present.
\begin{Lemma}
	For real numbers $x\in[0,1]$  and $t\geq1$, we have 
	\begin{align}\label{lem}
	(1+t)^x\geq1+(2^x-1)t^x.
	\end{align}
\end{Lemma}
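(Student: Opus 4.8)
The plan is to reduce this two–variable inequality to a one–variable monotonicity statement by an appropriate normalization. Since $t\geq 1>0$, I would divide both sides of (\ref{lem}) by $t^x$ and substitute $s=1/t$, which ranges over $(0,1]$ as $t$ ranges over $[1,\infty)$. The inequality then becomes the equivalent claim
\begin{align}
(1+s)^x-s^x\geq 2^x-1,\qquad s\in(0,1].
\end{align}
At $s=1$ (equivalently $t=1$) both sides coincide, so the bound is saturated there; this suggests establishing the inequality by showing that the left-hand side is monotone in $s$ in the correct direction.

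Accordingly, I would set $h(s)=(1+s)^x-s^x$ and study its derivative
\begin{align}
h'(s)=x\left[(1+s)^{x-1}-s^{x-1}\right].
\end{align}
The key observation is that for $x\in[0,1]$ the exponent satisfies $x-1\in[-1,0]$, so the power function $u\mapsto u^{x-1}$ is non-increasing on $(0,\infty)$. Since $1+s>s>0$, this yields $(1+s)^{x-1}\leq s^{x-1}$, hence $h'(s)\leq 0$ and $h$ is non-increasing on $(0,1]$. Therefore $h(s)\geq h(1)=2^x-1$ for every $s\in(0,1]$, which is exactly the displayed claim; undoing the substitution recovers (\ref{lem}).

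I expect the only delicate points to be bookkeeping rather than substance. The endpoints $x=0$ and $x=1$ reduce (\ref{lem}) to the trivial identities $1\geq1$ and $1+t\geq1+t$, so they can be disposed of directly, and the derivative argument then covers $x\in(0,1)$. The main conceptual step, and the part most easily gotten wrong, is the choice of normalization: dividing by $t^x$ and passing to $s=1/t$ is precisely what converts the constraint $t\geq1$ into $s\leq1$ and flips the monotonicity of the power functions so that the inequality points the right way. Once that substitution is in place, the remainder is a routine sign analysis of $h'$, with no genuine obstacle remaining.
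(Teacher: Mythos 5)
Your proof is correct and is essentially the paper's own argument in a change of variables: the paper's function $g_x(t)=\frac{(1+t)^x-1}{t^x}$ equals your $h(s)=(1+s)^x-s^x$ under $s=1/t$, and your statement that $h$ is non-increasing on $(0,1]$ is exactly the paper's statement that $g_x$ is non-decreasing on $[1,\infty)$, with the bound obtained at the common endpoint $t=s=1$. No substantive difference.
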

\begin{proof}
	The inequality (\ref{lem}) can be seen as a question to find the biggest value of the function on $t$, $$g_{x}(t)=\frac{(1+t)^x-1}{t^x}.$$
	Then we have when $t\ge 1,$ $$\frac{d g_{x}(t)}{d t}=xt^{-(x+1)}[1-(1+t)^{x-1}]\geq0,$$ hence $g_{x}(t)$ is an
	increasing function of $t$. Then when $x\in [0,1]$, $g_{x}(t)\geq g_{x}(1)$, i.e, $(1+t)^x\geq1+(2^x-1)t^x.$
\end{proof}
\section{Main Results}
\subsection{MONOGAMY RELATIONS IN TERMS OF TSALLIS ENTROPY FOR GENERALIZED W CLASS STSATES}
\indent First we present an analytical formula of the T$q$EE for the GW states, it is 
by the method in \cite{san2010tsallis}. 
\par \indent Assume $\rho_{AB}$ is a reduced density matrix of a pure GW state,  as $C(\rho_{AB})=C_a(\rho_{AB})$ \cite{san2008generalized}, and by the method in \cite{wootters1998entanglement}, we could find a decomposition $\{p_m,\ket{\theta_m}\}$ of $\rho_{AB}$ such that all of $C(\ket{\theta_m})$ are the same, that is, $C(\rho_{AB})=C_a(\rho_{AB})=\sum_m p_m C(\ket{\theta_m}).$ 
\begin{Theorem}\label{tq1}
	Assume $\rho_{A_{j_1}\cdots A_{j_m}}$ is a reduced density matrix of a pure GW state, then we have 
	\begin{align}
	T_q(\rho_{A_{j_1}|A_{j_2}\cdots A_{j_m}})=f_q(C^2(\rho_{A_{j_1}|A_{j_2}\cdots A_{j_m}})),
	\end{align}
	when $q\in[\frac{5-\sqrt{13}}{2},\frac{5+\sqrt{13}}{2}].$
\end{Theorem}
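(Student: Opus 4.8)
The plan is to prove the claimed identity by establishing the two inequalities $T_q(\rho_{A_{j_1}|A_{j_2}\cdots A_{j_m}})\le f_q(C^2(\rho_{A_{j_1}|A_{j_2}\cdots A_{j_m}}))$ and $T_q(\rho_{A_{j_1}|A_{j_2}\cdots A_{j_m}})\ge f_q(C^2(\rho_{A_{j_1}|A_{j_2}\cdots A_{j_m}}))$ separately, writing $\rho=\rho_{A_{j_1}\cdots A_{j_m}}$ and regarding it as a bipartite state across the cut $A_{j_1}|A_{j_2}\cdots A_{j_m}$. The structural fact driving everything is that $\rho$ is a reduced density matrix of a pure GW state, so by Lemma \ref{CK} (taking $p=1$, since a GW state is a GWV state with $p=1$) every pure state $\ket{\psi_i}$ occurring in any decomposition $\rho=\sum_i p_i\proj{\psi_i}$ is itself a GWV state, hence of Schmidt rank at most $2$ across the cut $A_{j_1}|A_{j_2}\cdots A_{j_m}$. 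Consequently the functional relation $(\ref{Tc})$ applies term by term, giving $T_q(\ket{\psi_i})=f_q(C^2(\ket{\psi_i}))$ for each $i$.

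For the upper bound I would use the equal-concurrence decomposition isolated in the remark preceding the theorem: because $C(\rho)=C_a(\rho)$ for reduced states of GW states (Lemma \ref{san}), Wootters' construction produces a decomposition $\{p_m,\ket{\theta_m}\}$ of $\rho$ with $C(\ket{\theta_m})=C(\rho)$ for every $m$. Inserting this particular decomposition into the definition of $T_q$ and applying the term-by-term relation gives $T_q(\rho)\le\sum_m p_m f_q(C^2(\ket{\theta_m}))=\sum_m p_m f_q(C^2(\rho))=f_q(C^2(\rho))$, where the equal-concurrence property is exactly what lets $f_q(C^2(\rho))$ be pulled out of the sum.

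For the lower bound I would take an \emph{arbitrary} decomposition $\rho=\sum_i p_i\proj{\psi_i}$ and set $g_q(x)=f_q(x^2)$ as in Lemma \ref{l3}. Since each $\ket{\psi_i}$ has Schmidt rank $2$, one has $\sum_i p_i T_q(\ket{\psi_i})=\sum_i p_i g_q(C(\ket{\psi_i}))$. Convexity of $g_q$ on $q\in[\frac{5-\sqrt{13}}{2},\frac{5+\sqrt{13}}{2}]$ together with Jensen's inequality yields $\sum_i p_i g_q(C(\ket{\psi_i}))\ge g_q(\sum_i p_i C(\ket{\psi_i}))$, and then monotonicity of $g_q$ combined with $\sum_i p_i C(\ket{\psi_i})\ge C(\rho)$ (which holds because $C(\rho)$ is the minimum over all decompositions) gives $g_q(\sum_i p_i C(\ket{\psi_i}))\ge g_q(C(\rho))=f_q(C^2(\rho))$. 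Taking the infimum over decompositions produces $T_q(\rho)\ge f_q(C^2(\rho))$, and combining with the upper bound closes the argument.

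I expect the main obstacle to be conceptual rather than computational: one must verify that the hypothesis $q\in[\frac{5-\sqrt{13}}{2},\frac{5+\sqrt{13}}{2}]$ is precisely what makes $g_q$ both increasing and convex, so that the Jensen-plus-monotonicity step of the lower bound is legitimate — this is the only place the restriction on $q$ is used — and that the equal-concurrence decomposition genuinely consists of Schmidt-rank-$2$ states so that $(\ref{Tc})$ is applicable in the upper bound. Both points are already supplied by Lemmas \ref{CK}, \ref{san}, and \ref{l3}, so no new estimate beyond these should be required.
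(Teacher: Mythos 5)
Your proposal is correct and matches the argument the paper intends: the paper gives Theorem \ref{tq1} only as a sketch (``by the method in [20]'' plus the preparatory remark about the equal-concurrence Wootters decomposition), and your reconstruction — upper bound from that equal-concurrence decomposition combined with the Schmidt-rank-2 property from Lemma \ref{CK} and relation (\ref{Tc}), lower bound from monotonicity and convexity of $g_q$ (Lemma \ref{l3}) via Jensen — is exactly the intended route. No gaps.
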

Similar to the method in \cite{luo2016general}, we could present an analytic formula for the T$q$EEoA for the GW states.
\begin{Theorem}\label{tq2}
	Assume $\rho_{A_{j_1}\cdots A_{j_m}}$ is a reduced density matrix of a pure GW state, then we have 
	\begin{align}
	T_q^a(\rho_{A_{j_1}|A_{j_2}\cdots A_{j_m}})=f_q(C^2(\rho_{A_{j_1}|A_{j_2}\cdots A_{j_m}})),
	\end{align}
	when $q\in[\frac{5-\sqrt{13}}{2},2]\cup[3,\frac{5+\sqrt{13}}{2}].$
\end{Theorem}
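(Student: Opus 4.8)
The plan is to mirror the argument for Theorem \ref{tq1} and the method of \cite{luo2016general}, but with the minimization replaced by a maximization and the convexity of $g_q$ replaced by the concavity of $f_q$. Since $\rho_{A_{j_1}\cdots A_{j_m}}$ is a reduced density matrix of a pure GW state, Lemma \ref{CK} ensures that in \emph{every} pure-state decomposition $\rho_{A_{j_1}\cdots A_{j_m}}=\sum_i p_i\proj{\phi_i}$ each $\ket{\phi_i}$ is a GWV state, hence is Schmidt rank $2$ across the cut $A_{j_1}|A_{j_2}\cdots A_{j_m}$. Consequently the relation (\ref{Tc}) holds term by term, $T_q(\ket{\phi_i})=f_q(C^2(\ket{\phi_i}))$, so that $T_q^a(\rho_{A_{j_1}|A_{j_2}\cdots A_{j_m}})=\max_{\{p_i,\ket{\phi_i}\}}\sum_i p_i f_q(C^2(\ket{\phi_i}))$. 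Abbreviating the cut concurrence $C(\rho_{A_{j_1}|A_{j_2}\cdots A_{j_m}})$ by $C_0$, I would prove the claimed identity by bounding this maximum from both sides by $f_q(C_0^2)$.

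For the lower bound I would reuse the decomposition exhibited just before Theorem \ref{tq1}. By Lemma \ref{san} we have $C=C_a$ on reduced GW states, so Wootters' construction \cite{wootters1998entanglement} supplies a decomposition $\{p_m,\ket{\theta_m}\}$ with $C(\ket{\theta_m})=C_0$ for every $m$. Substituting into the expression above gives $\sum_m p_m f_q(C_0^2)=f_q(C_0^2)$, and since $T_q^a$ is a maximum over decompositions this already yields $T_q^a(\rho_{A_{j_1}|A_{j_2}\cdots A_{j_m}})\ge f_q(C_0^2)$.

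For the upper bound I would use Lemma \ref{l2}: on the stated range $q\in[\frac{5-\sqrt{13}}{2},2]\cup[3,\frac{5+\sqrt{13}}{2}]$ the map $f_q$ is increasing and concave. For an arbitrary decomposition, Jensen's inequality applied to the concave $f_q$ with weights $p_i$ and arguments $C^2(\ket{\phi_i})$ gives $\sum_i p_i f_q(C^2(\ket{\phi_i}))\le f_q\!\left(\sum_i p_i C^2(\ket{\phi_i})\right)$. Because $f_q$ is increasing, it would then be enough to show the second-moment bound $\sum_i p_i C^2(\ket{\phi_i})\le C_0^2$ for every decomposition; together with the lower bound this would force equality.

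I expect this second-moment bound to be the main obstacle. The identity $C=C_a$ only pins down the \emph{first} moment, $\sum_i p_i C(\ket{\phi_i})=C_0$ for all decompositions, and Cauchy--Schwarz then gives $\sum_i p_i C^2(\ket{\phi_i})\ge\left(\sum_i p_i C(\ket{\phi_i})\right)^2=C_0^2$, which is the reverse of what is needed and is saturated exactly by the equal-concurrence decomposition. Hence a generic decomposition pushes the second moment \emph{above} $C_0^2$, and the upper bound cannot follow from $C=C_a$ alone; I would try to rescue it by exploiting the sharper structure of Lemma \ref{san}, namely that every pairwise concurrence coincides with its assisted value, to constrain which second moments are actually attainable by decompositions of a GW reduced state. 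Settling whether this genuinely caps $\sum_i p_i C^2(\ket{\phi_i})$ at $C_0^2$ is, in my view, the crux of the theorem.
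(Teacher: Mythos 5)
Your plan coincides with the paper's own proof of Theorem \ref{tq2}: both establish the lower bound $T_q^a(\rho_{AB})\ge f_q(C^2(\rho_{AB}))$ from the equal-concurrence decomposition supplied by $C=C_a$ (the paper phrases this via the convexity of $g_q$), and both reduce the upper bound, via monotonicity and concavity of $f_q$ plus Jensen, to the second-moment inequality $\sum_i p_iC^2(\ket{\phi_i})\le C^2(\rho_{AB})$. The step you single out as the crux is exactly the step at which the paper's argument fails: after reaching $T_q^a(\rho_{AB})\le f_q\bigl(\max_{\{r_k,\ket{\theta_k}\}}\sum_k r_kC^2(\ket{\theta_k})\bigr)$, the paper closes the gap by asserting, ``by the Cauchy--Schwarz inequality,'' that $\sum_k r_kC^2(\ket{\theta_k})\le\bigl[\sum_k r_kC(\ket{\theta_k})\bigr]^2$. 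As you correctly observe, Cauchy--Schwarz gives the \emph{reverse} inequality $\bigl[\sum_k r_kC(\ket{\theta_k})\bigr]^2\le\sum_k r_kC^2(\ket{\theta_k})$, so the paper's proof of the upper bound is invalid; your refusal to sign off on that step is not a defect of your write-up.

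Moreover, the second-moment bound cannot be rescued by Lemma \ref{san}: it is simply false, and with it the asserted equality. Take the ordinary three-qubit $W$ state $\ket{W}=\tfrac{1}{\sqrt3}(\ket{100}+\ket{010}+\ket{001})$, a GW state with $d=1$. Its two-party marginal is $\rho_{AB}=\tfrac{2}{3}\proj{\psi^+}+\tfrac{1}{3}\proj{00}$ with $\ket{\psi^+}=\tfrac{1}{\sqrt2}(\ket{10}+\ket{01})$, and $C(\rho_{AB})=C_a(\rho_{AB})=\tfrac{2}{3}$. For $q=2$ one has $f_2(x)=x/2$, so the theorem claims $T_2^a(\rho_{AB})=f_2(4/9)=2/9$; but this eigendecomposition is a legitimate pure-state decomposition into GWV states (consistent with Lemma \ref{CK}) and yields $\sum_ip_iT_2(\ket{\phi_i})=\tfrac{2}{3}\cdot\tfrac{1}{2}=\tfrac{1}{3}>\tfrac{2}{9}$, whence $T_2^a(\rho_{AB})\ge\tfrac{1}{3}$. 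Only the lower bound $T_q^a\ge f_q(C^2)$, which both you and the paper prove correctly, survives; the equality stated in Theorem \ref{tq2} (and hence Theorem \ref{tq3}) fails.
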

\begin{proof}
	\indent Here we denote $\rho_{A_{j_1}|A_{j_2}\cdots A_{j_m}}$ as $\rho_{AB}$ below. First we would prove $T^a_q(\rho_{AB})\le f_q(C^2(\rho_{AB})).$ Assume the decomposition $\{p_i,\ket{\psi_i}_{AB}\}$ is an optimal decomposition for the T$q$EEoA of $\rho_{AB},$ then we have
	\begin{align}\label{th11}
	T^a_q(\rho_{AB})=&\sum_ip_i T_q(\ket{\psi_i}_{AB})\nonumber\\
	=&\sum_ip_i f_q(C^2(\ket{\psi_i}_{AB}))\nonumber\\
	\le& f_q(\sum_ip_iC^2(\ket{\psi_i}_{AB}))\le f_q(C_a^2(\rho_{AB})),
	\end{align}
	where in the first equality, we use the definition of T$q$EEoA,  the first inequality holds due to the Lemma \ref{l2}, the second inequality is due to the \ref{l2} and the definition of $C_a^2(\rho_{AB})=\max\sum_k r_k C^2(\ket{\theta_k}),$ where the maximum takes over all the decompositions of $\{r_k,\theta_k\}.$\\
	\indent Then we will prove $T^a_q(\rho_{AB})\ge f_q(C^2(\rho_{AB})),$ we can obtain
	\begin{align}\label{th12}
	T^a_q(\rho_{AB})=&\sum_ip_i T_q(\ket{\psi_i}_{AB})\nonumber\\
	=&\sum_ip_i f_q(C(\ket{\psi_i}_{AB}))\nonumber\\
	\ge& f_q(\sum_i p_iC(\ket{\psi_i}_{AB}))\nonumber\\
	\ge& f_q(\sum_j s_jC(\ket{\zeta_j}_{AB}))=f_q(C(\rho_{AB})).
	\end{align}
	Here in the first inequality, we use the convexity of $f_q(C)$ when $q\in[\frac{5-\sqrt{13}}{2},\frac{5+\sqrt{13}}{2}]$ and the Cauchy-Schwartz inequality, in the second inequality, we denote that the decomposition $\{s_j,\ket{\zeta_j}\}$ is the optimal decomposition for the concurrence $C.$ Combing the inequalities $(\ref{th11})$ and $(\ref{th12}),$ we have
	\begin{align}
	f_q(C_a^2(\rho_{AB}))\ge T_q(\rho_{AB})\ge f_q(C(\rho_{AB})),\label{th11a}
	\end{align}
	then as $C(\rho_{AB})=C_a(\rho_{AB})$ \cite{san2008generalized}, the inequality $\ref{th11a}$ can be written as
	\begin{align}
		f_q(C_a^2(\rho_{AB}))\ge T_q(\rho_{AB})\ge f_q(C_a(\rho_{AB})),\label{th11b}
	\end{align}
	then by the Cauchy-Schwarz inequality, we have
	\begin{align}
	\sum_k r_kC^2(\ket{\theta_k})=	\sum_k (\sqrt{r_k})^2\sum_k (\sqrt{r_k}C(\ket{\theta_k})^2\le [ \sum_k r_kC(\theta_k)]^2
	\end{align}
	 We finish the proof.\\
\end{proof}
\par \indent Combing the Theorem \ref{tq1} and \ref{tq2}, we have the following results.
\begin{Theorem}\label{tq3}
Assume $\rho_{A_{j_1}\cdots A_{j_m}}$ is a reduced density matrix of a pure GW state, then we have 
\begin{align}
T_q^a(\rho_{A_{j_1}|A_{j_2}\cdots A_{j_m}})=T_q(\rho_{A_{j_1}|A_{j_2}\cdots A_{j_m}})=f_q(C^2(\rho_{A_{j_1}|A_{j_2}\cdots A_{j_m}})),
\end{align}
when $q\in[\frac{5-\sqrt{13}}{2},2]\cup[3,\frac{5+\sqrt{13}}{2}].$
\end{Theorem}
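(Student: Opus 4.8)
The plan is to obtain this statement as a direct corollary of the two immediately preceding theorems, since each of them already equates one of the quantities $T_q$ and $T_q^a$ with the common expression $f_q(C^2(\rho_{A_{j_1}|A_{j_2}\cdots A_{j_m}}))$. First I would record the two identities separately: Theorem \ref{tq1} gives $T_q(\rho_{A_{j_1}|A_{j_2}\cdots A_{j_m}})=f_q(C^2(\rho_{A_{j_1}|A_{j_2}\cdots A_{j_m}}))$ valid for $q\in[\frac{5-\sqrt{13}}{2},\frac{5+\sqrt{13}}{2}]$, while Theorem \ref{tq2} gives $T_q^a(\rho_{A_{j_1}|A_{j_2}\cdots A_{j_m}})=f_q(C^2(\rho_{A_{j_1}|A_{j_2}\cdots A_{j_m}}))$ valid for $q\in[\frac{5-\sqrt{13}}{2},2]\cup[3,\frac{5+\sqrt{13}}{2}]$. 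Both are already proved above, so no new analytic input is required here.

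Next I would intersect the two ranges of validity so that both identities are simultaneously available. Since $[\frac{5-\sqrt{13}}{2},2]\cup[3,\frac{5+\sqrt{13}}{2}]\subseteq[\frac{5-\sqrt{13}}{2},\frac{5+\sqrt{13}}{2}]$, the common domain is exactly $[\frac{5-\sqrt{13}}{2},2]\cup[3,\frac{5+\sqrt{13}}{2}]$, which is precisely the range asserted in the theorem. On this range the two right-hand sides are literally the same quantity $f_q(C^2(\rho_{A_{j_1}|A_{j_2}\cdots A_{j_m}}))$, so chaining the identities through this shared value and using transitivity of equality yields $T_q^a(\rho_{A_{j_1}|A_{j_2}\cdots A_{j_m}})=T_q(\rho_{A_{j_1}|A_{j_2}\cdots A_{j_m}})=f_q(C^2(\rho_{A_{j_1}|A_{j_2}\cdots A_{j_m}}))$.

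I do not expect a genuine obstacle here: all of the substantive work---the convexity and concavity behaviour of $f_q$ from Lemmas \ref{l1}--\ref{l3}, and the coincidence $C(\rho_{AB})=C_a(\rho_{AB})$ for reduced GW states from Lemma \ref{san}---has already been absorbed into Theorems \ref{tq1} and \ref{tq2}. The only point demanding care is the bookkeeping of the parameter interval: one must verify that the branch $[3,\frac{5+\sqrt{13}}{2}]$ and the excluded gap $(2,3)$ inherited from Theorem \ref{tq2} are tracked correctly, so that the stated range is exactly the domain on which both equalities hold and the chaining is legitimate.
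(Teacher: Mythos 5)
Your proposal is correct and matches the paper exactly: the paper offers no separate proof of Theorem \ref{tq3}, simply stating that it follows by combining Theorems \ref{tq1} and \ref{tq2}, which is precisely your chaining of the two identities on the intersection of their ranges of validity. Your extra care in checking that the common domain is $[\frac{5-\sqrt{13}}{2},2]\cup[3,\frac{5+\sqrt{13}}{2}]$ is a sound (if implicit in the paper) piece of bookkeeping.
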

\indent Next we will provide a monogamy relation in terms of the T$q$EE for the reduced density matrix of the GW states when $q\in[\frac{5-\sqrt{13}}{2},\frac{5+\sqrt{13}}{2}].$ 
\begin{Theorem}\label{T2}
	Assume $\rho_{A_{j_1}A_{j_2}\cdots A_{j_m}}$ is the reduced density matrix of a GW state $\ket{\psi}_{A_1\cdots A_n},$ and here we denote $\{P_1,P_2,\cdots,P_k\}$ is a partition of the set $\{A_{j_1},A_{j_2},\cdots,A_{j_m}\},$ when $q\in[\frac{5-\sqrt{13}}{2},\frac{5+\sqrt{13}}{2}],$ we have the following monogamy inequality,
	\begin{align}
	T_q^2(\rho_{P_1|P_2\cdots P_k})\ge \sum_{i=2}^{k} T_q^2(\rho_{P_1P_i}).
	\end{align}
\end{Theorem}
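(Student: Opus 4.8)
The plan is to convert every T$q$EE in the claim into a squared concurrence, thereby reducing the monogamy inequality to a single analytic statement about the function $f_q^2$, and then to feed in the special concurrence identity that GW states satisfy. By Theorem \ref{tq1} each of $T_q(\rho_{P_1|P_2\cdots P_k})$ and $T_q(\rho_{P_1P_i})$ equals $f_q$ evaluated at the corresponding squared concurrence, and crucially this representation is valid on the whole range $q\in[\frac{5-\sqrt{13}}{2},\frac{5+\sqrt{13}}{2}]$, which is exactly the interval appearing in the statement. Squaring, the left-hand side becomes $f_q^2(C^2(\rho_{P_1|P_2\cdots P_k}))$ and the right-hand side becomes $\sum_{i=2}^k f_q^2(C^2(\rho_{P_1P_i}))$, so it suffices to compare these two expressions.

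First I would invoke Lemma \ref{san}: for the GW states the squared concurrence across the one-versus-rest cut splits exactly as a sum of pairwise squared concurrences. Applied to the block $P_1$ against $P_2\cdots P_k$ this gives
\[
C^2(\rho_{P_1|P_2\cdots P_k})=\sum_{i=2}^k C^2(\rho_{P_1P_i}).
\]
Writing $x_i=C^2(\rho_{P_1P_i})\ge 0$ and $X=\sum_{i=2}^k x_i$, the claim collapses to the single inequality $f_q^2(X)\ge\sum_{i=2}^k f_q^2(x_i)$, i.e.\ to the superadditivity of $f_q^2$ on $[0,1]$.

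For that step I would use Lemma \ref{l1}, which guarantees that on the stated $q$-range $f_q^2$ is convex and increasing, together with the direct computation $f_q(0)=0$ (at $x=0$ one has $\sqrt{1-x}=1$, so the bracket in $(\ref{fq})$ equals $1-1-0=0$), whence $f_q^2(0)=0$. For any convex $h$ with $h(0)=0$ one has, for each $i$, $h(x_i)=h\bigl((x_i/X)\,X+(1-x_i/X)\,0\bigr)\le (x_i/X)\,h(X)$; summing over $i$ and using $\sum_i x_i=X$ yields $\sum_i h(x_i)\le h(X)$. Taking $h=f_q^2$ gives precisely $f_q^2(X)\ge\sum_{i=2}^k f_q^2(x_i)$, and substituting back through Theorem \ref{tq1} completes the argument.

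The main obstacle I anticipate is not the convexity/superadditivity estimate, which is routine once $f_q^2(0)=0$ is observed, but rather the bookkeeping needed to apply the concurrence identity of Lemma \ref{san} — stated for a partition of the full system of a pure GW state — to the reduced density matrix $\rho_{A_{j_1}\cdots A_{j_m}}$ together with the sub-partition $\{P_1,\dots,P_k\}$. This relies on the facts that $C=C_a$ for reduced density matrices of GW states and that each block $P_i$ may be treated as a single (higher-dimensional) party, so that the one-versus-rest splitting of the squared concurrence remains an exact equality at the level of the reduced state. Once that identification is justified, the two analytic ingredients assemble immediately, and it is reassuring that the interval on which both Theorem \ref{tq1} and Lemma \ref{l1} hold coincides exactly with the interval $[\frac{5-\sqrt{13}}{2},\frac{5+\sqrt{13}}{2}]$ quoted in the theorem.
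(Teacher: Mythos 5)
Your proposal is correct and follows essentially the same route as the paper's own proof: rewrite each $T_q$ as $f_q(C^2)$ via Theorem \ref{tq1}, invoke Lemma \ref{san} to split $C^2(\rho_{P_1|P_2\cdots P_k})$ into $\sum_{i=2}^k C^2(\rho_{P_1P_i})$, and conclude by the superadditivity of $f_q^2$ coming from Lemma \ref{l1}. Your explicit derivation of superadditivity from convexity together with $f_q^2(0)=0$ is a welcome addition, since the paper cites Lemma \ref{l1} for that step without spelling it out.
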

\begin{proof}
	When $q\in[\frac{5-\sqrt{13}}{2},\frac{5+\sqrt{13}}{2}],$ we have
	\begin{align}
	T_q^2(\rho_{P_1|P_2\cdots P_k})=&f_q^2(C^2(\rho_{P_1|P_2\cdots P_k}))\nonumber\\
	=&f_q^2(\sum_{i=2}^{k} C^2(\rho_{P_1P_i}))\nonumber\\
	\ge& \sum_{i=2}^{k} f_q^2(C^2(\rho_{P_1P_i}))\nonumber\\
	=& \sum_{i=2}^{k} T_q^2(\rho_{P_1P_i}).
	\end{align}
	Here the second equality is due to the result $\sum_{i=2}^{k} C^2(\rho_{P_1P_i})=C^2(\rho_{P_1|P_2\cdots P_k})$, the first inequality is due to the Lemma \ref{l1} .\\
\end{proof}
\par \indent We can generalize the above monogamy relations in terms of the squared T$q$EE to the $\alpha$-th power of T$q$EE for the GW states when $\alpha\ge 2$.
\begin{Corollary}
	Assume $\rho_{A_{j_1}A_{j_2}\cdots A_{j_m}}$ is the reduced density matrix of a GW state $\ket{\psi_{A_1\cdots A_n}},$ and here we denote $\{P_1,P_2,\cdots,P_k\}$ is a partition of the set $\{A_{j_1},A_{j_2},\cdots,A_{j_m}\},$ when $q\in[\frac{5-\sqrt{13}}{2},\frac{5+\sqrt{13}}{2}],$ we have the following monogamy inequality,
	\begin{align}
	T_q^{\alpha}(\rho_{P_1|P_2\cdots P_k})\ge \sum_{i=2}^{k} T_q^{\alpha}(\rho_{P_1P_i}).
	\end{align}
	when $\alpha\ge 2.$\\
\end{Corollary}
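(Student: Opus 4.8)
The plan is to bootstrap from the $\alpha=2$ case already established in Theorem \ref{T2}, using a superadditivity inequality for the power map on sums of nonnegative reals. Set $\beta=\alpha/2$, so that $\beta\ge 1$ is equivalent to $\alpha\ge 2$. Every T$q$EE appearing here is nonnegative (since $f_q\ge 0$ on $[0,1]$), which is exactly the hypothesis the key inequality needs.

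First I would record the elementary fact that for any $\beta\ge 1$ and nonnegative reals $a_2,\dots,a_k$,
$$\left(\sum_{i=2}^{k}a_i\right)^{\beta}\ge \sum_{i=2}^{k}a_i^{\beta}.$$
This follows by induction on $k$ from the two-term case $(a+b)^{\beta}\ge a^{\beta}+b^{\beta}$. The two-term case is immediate: if $a+b=0$ both sides vanish, and otherwise dividing by $(a+b)^{\beta}$ reduces it to $s^{\beta}+(1-s)^{\beta}\le 1$ for $s=a/(a+b)\in[0,1]$, which holds because $x^{\beta}\le x$ whenever $x\in[0,1]$ and $\beta\ge 1$.

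With this in hand the corollary follows in two steps. Raising both sides of the monogamy inequality of Theorem \ref{T2},
$$T_q^2(\rho_{P_1|P_2\cdots P_k})\ge \sum_{i=2}^{k}T_q^2(\rho_{P_1P_i}),$$
to the power $\beta=\alpha/2\ge 1$ (a monotone operation on nonnegative numbers) gives
$$T_q^{\alpha}(\rho_{P_1|P_2\cdots P_k})\ge \left(\sum_{i=2}^{k}T_q^2(\rho_{P_1P_i})\right)^{\beta}.$$
Applying the power inequality above with $a_i=T_q^2(\rho_{P_1P_i})\ge 0$ then yields
$$\left(\sum_{i=2}^{k}T_q^2(\rho_{P_1P_i})\right)^{\beta}\ge \sum_{i=2}^{k}\left(T_q^2(\rho_{P_1P_i})\right)^{\beta}=\sum_{i=2}^{k}T_q^{\alpha}(\rho_{P_1P_i}),$$
and chaining the two displays completes the argument.

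There is no serious obstacle here: all the genuine content lives in the base case, which is Theorem \ref{T2}, valid on the full interval $q\in[\frac{5-\sqrt{13}}{2},\frac{5+\sqrt{13}}{2}]$. The one point worth flagging is that the power inequality requires $\beta\ge 1$, i.e. $\alpha\ge 2$, so the restriction on $\alpha$ is precisely what makes the superadditivity of $t\mapsto t^{\beta}$ available; for $\alpha<2$ the relevant inequality reverses and the argument would break. I would also emphasize that the range of $q$ is inherited unchanged from Theorem \ref{T2}, since the passage from the second power to the $\alpha$-th power uses nothing about $f_q$ beyond the $\alpha=2$ relation and the nonnegativity of the quantities involved.
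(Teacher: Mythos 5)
Your proposal is correct and follows exactly the route the paper intends: the corollary is stated as an immediate consequence of Theorem \ref{T2}, obtained by raising that inequality to the power $\alpha/2\ge 1$ and invoking the superadditivity $\left(\sum_i a_i\right)^{\beta}\ge \sum_i a_i^{\beta}$ for nonnegative $a_i$ and $\beta\ge 1$. The paper leaves this step implicit, and your write-up supplies precisely the missing elementary argument, including the correct justification of the two-term case and the observation that the range of $q$ is inherited unchanged.
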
\par
\indent Similarly, we have the polygamy relation in terms of T$q$EE for the GW states.
\begin{Theorem}\label{T1}
	Assume $\rho_{A_{j_1}A_{j_2}\cdots A_{j_m}}$ is the reduced density matrix of a GW state $\ket{\psi_{A_1\cdots A_n}},$ and here we denote $\{P_1,P_2,\cdots,P_k\}$ is a partition of the set $\{A_{j_1},A_{j_2},\cdots,A_{j_m}\},$ when $q\in[\frac{5-\sqrt{13}}{2},2]\cup[3,\frac{5+\sqrt{13}}{2}],$ we have the following monogamy inequality,
	\begin{align}
	T_q(\rho_{P_1|P_2\cdots P_k})\le \sum_{i=2}^{k} T_q(\rho_{P_1P_i}).
	\end{align}
\end{Theorem}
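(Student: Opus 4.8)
The plan is to mirror the structure of the proof of Theorem \ref{T2}, but to exploit concavity rather than convexity so that every inequality reverses direction and produces the polygamy bound. The starting point is to rewrite each Tsallis-$q$ entanglement in terms of the squared concurrence. Since the range $q\in[\frac{5-\sqrt{13}}{2},2]\cup[3,\frac{5+\sqrt{13}}{2}]$ is contained in the interval $[\frac{5-\sqrt{13}}{2},\frac{5+\sqrt{13}}{2}]$ on which Theorem \ref{tq1} holds, I would first invoke that theorem to write
\begin{align}
T_q(\rho_{P_1|P_2\cdots P_k})=f_q\bigl(C^2(\rho_{P_1|P_2\cdots P_k})\bigr),\qquad T_q(\rho_{P_1P_i})=f_q\bigl(C^2(\rho_{P_1P_i})\bigr).
\end{align}

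Next I would apply Lemma \ref{san}, the defining additivity of the squared concurrence for GW states, to substitute $C^2(\rho_{P_1|P_2\cdots P_k})=\sum_{i=2}^{k}C^2(\rho_{P_1P_i})$ into the argument of $f_q$. The claim then reduces to the single functional inequality
\begin{align}
f_q\Bigl(\sum_{i=2}^{k}x_i\Bigr)\le\sum_{i=2}^{k}f_q(x_i),\qquad x_i=C^2(\rho_{P_1P_i})\ge0.
\end{align}

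To establish this subadditivity I would use Lemma \ref{l2}, which guarantees that $f_q$ is increasing and concave on exactly the stated $q$-range, together with the elementary observation that $f_q(0)=0$, read directly off the definition $(\ref{fq})$ upon setting $x=0$. A concave function vanishing at the origin is subadditive on $[0,\infty)$: for nonnegative $a,b$ with $a+b>0$, concavity gives $f_q(a)\ge\frac{a}{a+b}f_q(a+b)+\frac{b}{a+b}f_q(0)$ and the symmetric bound for $f_q(b)$, and summing yields $f_q(a)+f_q(b)\ge f_q(a+b)$; an induction on the number of summands then delivers the displayed inequality. Chaining the three ingredients gives $T_q(\rho_{P_1|P_2\cdots P_k})=f_q(\sum_i x_i)\le\sum_i f_q(x_i)=\sum_i T_q(\rho_{P_1P_i})$, as required.

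The main obstacle is the subadditivity step, since the concavity supplied by Lemma \ref{l2} does not on its own yield $f_q(\sum_i x_i)\le\sum_i f_q(x_i)$ — one must additionally pin down $f_q(0)=0$ and handle the degenerate case in which all $x_i$ vanish (there $a+b=0$ makes the weights $\frac{a}{a+b}$ undefined, but the inequality reads $0\le0$ and holds trivially). Everything else is a direct translation of the monogamy argument of Theorem \ref{T2} with the roles of convexity and superadditivity replaced by concavity and subadditivity, and with the inequality directions reversed accordingly.
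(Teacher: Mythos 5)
Your proposal is correct and follows essentially the same route as the paper's own proof: express each $T_q$ via Theorem \ref{tq1} as $f_q(C^2)$, substitute the concurrence additivity of Lemma \ref{san}, and conclude by subadditivity of $f_q$ on the stated $q$-range. The only difference is that you spell out why concavity implies subadditivity (namely that $f_q(0)=0$, plus the weighted-average argument), a detail the paper leaves implicit when it justifies the key inequality simply by ``the concavity of $f_q(C^2)$.''
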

\begin{proof}
	When $q\in[\frac{5-\sqrt{13}}{2},2]\cup[3,\frac{5+\sqrt{13}}{2}],$ we have
	\begin{align}
	T_q(\rho_{P_1|P_2\cdots P_k})=&f_q(C^2(\rho_{P_1|P_2\cdots P_k}))\nonumber\\
	=&f_q(\sum_{i=2}^{k} C^2(\rho_{P_1P_i}))\nonumber\\
	\le& \sum_{i=2}^{k} f_q(C^2(\rho_{P_1P_i}))\nonumber\\
	=& \sum_{i=2}^{k} T_q(\rho_{P_1P_i}),
	\end{align}
	where the first inequality is due to the concavity of $f_q(C^2)$ as a function of $C^2.$\\
\end{proof}
As when $\beta\le 1,x\ge 1$, $(1+x)^{\beta}\le 1+x^{\beta}$, we have
\begin{Corollary}
	Assume $\rho_{A_{j_1}A_{j_2}\cdots A_{j_m}}$ is the reduced density matrix of a GW state $\ket{\psi_{A_1\cdots A_n}},$ and here we denote $\{P_1,P_2,\cdots,P_k\}$ is a partition of the set $\{A_{j_1},A_{j_2},\cdots,A_{j_m}\},$ when $q\in[\frac{5-\sqrt{13}}{2},2]\cup[3,\frac{5+\sqrt{13}}{2}],$ we have the following monogamy inequality,
	\begin{align}
	T_q^{\beta}(\rho_{P_1|P_2\cdots P_k})\le \sum_{i=2}^{k} T_q^{\beta}(\rho_{P_1P_i}).
	\end{align}
	when $\beta\le 1.$\\
\end{Corollary}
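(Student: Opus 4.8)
The plan is to obtain this power-$\beta$ version directly from the unit-power polygamy relation of Theorem~\ref{T1} by combining the monotonicity of the map $t\mapsto t^{\beta}$ with the subadditivity of that same map on the nonnegative reals. Throughout I would assume $0<\beta\le 1$ (for $\beta=0$ the claim degenerates, and for $\beta<0$ the monotonicity used below reverses), and I abbreviate $a_i=T_q(\rho_{P_1P_i})\ge 0$ for $i=2,\dots,k$, these being nonnegative since the T$q$EE is nonnegative.

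First I would invoke Theorem~\ref{T1}, valid on the stated range $q\in[\frac{5-\sqrt{13}}{2},2]\cup[3,\frac{5+\sqrt{13}}{2}]$, to get
\begin{align}
T_q(\rho_{P_1|P_2\cdots P_k})\le \sum_{i=2}^{k} a_i.
\end{align}
Since both sides are nonnegative and $t\mapsto t^{\beta}$ is monotonically increasing for $\beta>0$, raising to the $\beta$-th power preserves the inequality, so that
\begin{align}
T_q^{\beta}(\rho_{P_1|P_2\cdots P_k})\le \Big(\sum_{i=2}^{k} a_i\Big)^{\beta}.
\end{align}

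The substance of the argument is then the elementary estimate $\big(\sum_{i=2}^{k} a_i\big)^{\beta}\le \sum_{i=2}^{k} a_i^{\beta}$ for $0<\beta\le 1$ and $a_i\ge 0$, which I would establish by induction on $k$ from the two-term inequality $(a+b)^{\beta}\le a^{\beta}+b^{\beta}$. The two-term case is exactly the inequality $(1+x)^{\beta}\le 1+x^{\beta}$ recorded just before the statement: assuming without loss of generality $a\ge b>0$ and setting $x=a/b\ge 1$, one writes $(a+b)^{\beta}=b^{\beta}(1+x)^{\beta}\le b^{\beta}(1+x^{\beta})=a^{\beta}+b^{\beta}$, while the cases where a summand vanishes are trivial. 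For the inductive step I would group $\sum_{i=2}^{k}a_i=\big(\sum_{i=2}^{k-1}a_i\big)+a_k$, apply the two-term inequality to the partial sum and $a_k$, and then invoke the induction hypothesis on $\big(\sum_{i=2}^{k-1}a_i\big)^{\beta}$. Chaining this with the previous display yields $T_q^{\beta}(\rho_{P_1|P_2\cdots P_k})\le\sum_{i=2}^{k}a_i^{\beta}=\sum_{i=2}^{k}T_q^{\beta}(\rho_{P_1P_i})$, as desired.

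The only point requiring care---rather than a genuine obstacle---is the bookkeeping in the two-term inequality: to apply the recorded form $(1+x)^{\beta}\le 1+x^{\beta}$ under its hypothesis $x\ge 1$, one must factor out the \emph{smaller} of the two summands so that the resulting ratio is at least $1$. Equivalently, one may simply note that $(1+x)^{\beta}\le 1+x^{\beta}$ holds for every $x\ge 0$ when $0\le\beta\le 1$, which removes the need to order the terms. No property of the GW state is used beyond Theorem~\ref{T1}, so the corollary follows as soon as the scalar subadditivity is in place.
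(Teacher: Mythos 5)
Your proposal is correct and follows essentially the same route as the paper, which derives the corollary from Theorem~\ref{T1} together with the scalar inequality $(1+x)^{\beta}\le 1+x^{\beta}$ for $\beta\le 1$. You merely spell out the induction over $k$ and the factoring step that the paper leaves implicit.
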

\par \indent Next we present an example to show the meaning of the Theorem \ref{T2} and Theorem \ref{T1}.\\
\begin{example}
\begin{align}
&\ket{\psi}_{A_1A_2A_3A_4}\nonumber\\
=&0.3\ket{0001}+0.4\ket{0010}+{0.5}\ket{0100}+\sqrt{0.5}\ket{1000},
\end{align}
then 
\begin{align*}
\rho_{A_1A_2A_3}=&0.09\ket{000}\bra{000}+\ket{\phi}\bra{\phi}\nonumber\\
\ket{\phi}=&0.4\ket{001}+0.5\ket{010}+\sqrt{0.5}\ket{100},
\end{align*}
through computation, we have
\begin{align*}
T_q(\rho_{A_1A_2})=&f_q(\frac{\sqrt{2}}{2}),\nonumber\\
T_q(\rho_{A_1A_3})=&f_q(\frac{40-8\sqrt{17}}{125}),\nonumber
\end{align*}
\par \indent Then combing the Theorem \ref{T2} and Theorem \ref{T1}, we have 
\begin{align}
\sqrt{T^2_q(\rho_{A_1A_2})+T^2_q(\rho_{A_1A_3})}\le T_q(\rho_{A_1|A_2A_3})\le T_q(\rho_{A_1A_2})+T_q(\rho_{A_1A_3})\label{e1}
\end{align}
\begin{figure}
	\centering
	% Requires \usepackage{graphicx}
	\includegraphics[width=110mm]{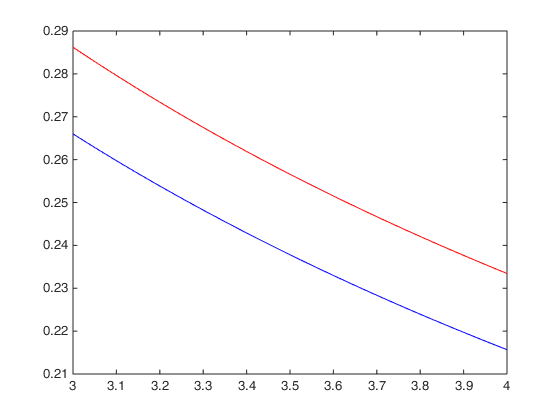}\\
	\caption{In this figure, the red line represents the function $T_q(\rho_{A_1A_2})+T_q(\rho_{A_1A_3})$ on variable $q$ when $q\in [3,4],$ the blue line represents the function $\sqrt{T^2_q(\rho_{A_1A_2})+T^2_q(\rho_{A_1A_3})}$ on variable $q$ when $q\in [3,4].$ }\label{}
\end{figure}
	\end{example}
\par
\indent From the proof of the monogamy inequalities for the GW states above, we see that the method can be generalized to derive monogamy inequalities for the GW states in terms of other entanglement measures, such as the squared convex roof extended negativity \cite{choi2015negativity},
the $R\acute{e}nyi-\alpha$ entropy for $\alpha$ in some region \cite{song2016general} and the unified entropy \cite{hu2006generalized,san2011unified}.\\
\indent Then we present a class of mixed states proposed by \cite{san2008generalized}, which is defined as
\begin{align}
\rho_{p}=p\ket{W_n^d}\bra{W_n^d}+(1-p)\ket{0}^{\otimes n}\bra{0},
\end{align}
where $p\in [0,1],$ this class of states were proved to satisfy the monogamy relation in terms of concurrence by \cite{san2008generalized}, and we show they also satisfy the Theorem \ref{T2} and \ref{T1}, first we consider the purification of $\rho_{p}$ such that
\begin{align}
\ket{\psi}_{p}=\sum_{i=1}^{n}\sqrt{p}(a_{1i}\ket{i00\cdots 0}+a_{2i}\ket{0i0\cdots 0}+\cdots+a_{ni}\ket{000\cdots i})\ket{0}+\sqrt{1-p}\ket{00\cdots0}\ket{\phi},
\end{align}
here we denote that $\ket{\phi}=\sum_{i=1}^d a_{n+1i}\ket{i}$ with $\sum_i |a_{n+1i}|^2=1,$ it is a GW state. Then we know that the above properties shown for the GW states are also valid for these mixed states.
\subsection{GENERALIZED MONOGAMY RELATIONS FOR THE GENERALIZED W CLASS STATES}
 \indent Here we recall the subadditivity in terms of Tsallis-q entropy.
 \begin{Lemma}\label{l4}\cite{audenaert2007subadditivity}
 	 For a bipartite state $\rho_{AB}$ on a Hilbert space $\mathcal{H}_A\otimes \mathcal{H}_B,$ then  when $q>1,$
 	 \begin{align}\label{t1}
 	\frac{1-\tr\rho_{AB}^q}{q-1}\le T_q(\rho_A)+T_q(\rho_B).
 	 \end{align}
 \end{Lemma}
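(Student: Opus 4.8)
The plan is to clear the positive denominator $q-1$ and work with the equivalent purely multiplicative form of the claim. Since $T_q(\rho_A)=\frac{1-\tr\rho_A^q}{q-1}$ and likewise for $B$, the inequality $(\ref{t1})$ is equivalent to
\begin{align}
\tr\rho_A^q+\tr\rho_B^q\le 1+\tr\rho_{AB}^q,\qquad q>1.
\end{align}
I would prove this by reducing the quantum statement to the corresponding classical (commutative) inequality for a joint probability array, and then settling the latter by elementary means.

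First I would diagonalize the marginals: let $\{\ket{i}_A\}$ and $\{\ket{j}_B\}$ be orthonormal eigenbases of $\rho_A$ and $\rho_B$, and set $c_{ij}=\bra{i}\bra{j}\rho_{AB}\ket{i}\ket{j}\ge 0$, the diagonal entries of $\rho_{AB}$ in the product basis $\{\ket{i}_A\ket{j}_B\}$. Because $\{\ket{i}_A\}$ diagonalizes $\rho_A=\tr_B\rho_{AB}$, the row sums reproduce the marginal eigenvalues, $\sum_j c_{ij}=a_i$, and similarly $\sum_i c_{ij}=b_j$, with $\sum_{ij}c_{ij}=1$. Hence $\tr\rho_A^q=\sum_i(\sum_j c_{ij})^q$ and $\tr\rho_B^q=\sum_j(\sum_i c_{ij})^q$ \emph{exactly}. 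For the joint term I would invoke Schur--Horn majorization: the diagonal of $\rho_{AB}$ is majorized by its spectrum, so for the convex function $x\mapsto x^q$ ($q>1$) one gets $\tr\rho_{AB}^q=\sum_k\lambda_k^q\ge\sum_{ij}c_{ij}^q$. The whole point of choosing the product eigenbasis of the marginals is that it renders the two marginal terms exact while pushing the joint term only in the favourable direction, so it suffices to prove the classical inequality $\sum_i(\sum_j c_{ij})^q+\sum_j(\sum_i c_{ij})^q\le 1+\sum_{ij}c_{ij}^q$ for a nonnegative array with total mass $1$.

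For the classical step I would introduce the conditional distribution $t_{j|i}=c_{ij}/r_i$ with $r_i=\sum_j c_{ij}$, so that $\sum_{ij}c_{ij}^q=\sum_i r_i^q\sum_j t_{j|i}^q$. Writing the classical Tsallis entropy $H_q(p)=\frac{1-\sum_k p_k^q}{q-1}\ge 0$ and $s_j=\sum_i c_{ij}$, a short rearrangement turns the target into $H_q(s)\ge\sum_i r_i^q\,H_q(t_{\cdot|i})$. Two facts then finish it: the marginal $s=\sum_i r_i\,t_{\cdot|i}$ is a convex combination of the conditionals, so concavity of $H_q$ gives $H_q(s)\ge\sum_i r_i\,H_q(t_{\cdot|i})$; and since $0\le r_i\le 1$ with $q>1$ we have $r_i^q\le r_i$, while $H_q\ge 0$, so $\sum_i r_i^q\,H_q(t_{\cdot|i})\le\sum_i r_i\,H_q(t_{\cdot|i})$. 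Chaining these two bounds proves the classical inequality and hence $(\ref{t1})$. I expect the one genuinely nontrivial move to be the quantum-to-classical reduction: recognizing that the product eigenbasis of the marginals is exactly the basis in which Schur--Horn only decreases the large (joint) side while keeping both marginal terms on the nose. Once that alignment is in place, the remaining classical inequality is routine, resting only on the concavity of $H_q$ and the elementary bound $r_i^q\le r_i$ valid for $q>1$.
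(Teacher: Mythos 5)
Your proof is correct, but it is worth noting that the paper itself gives no proof of this lemma at all: it is imported wholesale as a citation to Audenaert's subadditivity paper \cite{audenaert2007subadditivity}. So you have supplied a self-contained argument where the paper supplies only a reference. Your argument checks out at every step. The reduction is sound: dephasing $\rho_{AB}$ in the product of the eigenbases of the two marginals leaves both marginals (hence both terms $\tr\rho_A^q$, $\tr\rho_B^q$) untouched, while Schur majorization of the diagonal by the spectrum, combined with convexity of $x\mapsto x^q$ for $q>1$, gives $\sum_{ij}c_{ij}^q\le\tr\rho_{AB}^q$, which is the right direction since that term sits on the large side. Your classical step is also correct: the identity $\sum_{ij}c_{ij}^q=\sum_i r_i^q\sum_j t_{j|i}^q$ turns the target into $H_q(s)\ge\sum_i r_i^q H_q(t_{\cdot|i})$, which follows by chaining concavity of $H_q$ (valid for $q>1$ since $-x^q$ is concave and $q-1>0$) with $r_i^q\le r_i$ and $H_q\ge 0$; this is exactly the classical chain rule $H_q(A,B)=H_q(A)+\sum_i r_i^q H_q(B|A=i)$ plus nonnegativity-type conditioning. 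The only cosmetic point is the degenerate case $r_i=0$, where $t_{\cdot|i}$ is undefined but the corresponding terms vanish, so nothing breaks. Compared with simply citing the literature, your route has the pedagogical advantage of exposing why $q>1$ is essential in two separate places (the direction of the majorization step and the bound $r_i^q\le r_i$), and it makes clear that the quantum statement is no harder than the classical one once the dephasing basis is aligned with both marginals simultaneously.
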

\par \indent By the Lemma \ref{l4}, for a tripartite pure state $\ket{\psi}_{ABC}, $ we have 
\begin{align*}
\frac{1-\tr\rho_{AC}^q}{q-1}=&T_q(\rho_B)\nonumber\\
\ge& \frac{1-\tr\rho_{BC}^q}{q-1}-T_q(\rho_{C})\nonumber\\
=& T_q(\rho_A)-T_q(\rho_{C}),
\end{align*}
then combing the Lemma \ref{l4}, we have
\begin{align}
|T_q(\rho_A)-T_q(\rho_B)|\le \frac{1-\tr\rho_{AB}^q}{q-1}\le T_q(\rho_A)+T_q(\rho_B).\label{stq}
\end{align}
 \begin{Theorem}\label{T3}
 	Assume $\ket{\psi}_{PQR_1R_2\cdots R_{k-2}}$ is a GW state, when $q=2$ or $q=3,$ we have the following monogamy inequality,
 	\begin{align}
 	T_q(\ket{\psi}_{PQ|R_1R_2\cdots R_{k-2}})\le \sum_{i=1}^{k-2}[T_q(\rho_{PR_i})-T_q(\rho_{QR_i})].
 	\end{align}
 \end{Theorem}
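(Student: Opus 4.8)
The plan is to exploit two facts special to this setting: the left-hand side is really a Tsallis entropy of a marginal (because the global state is pure), and the conversion function $f_q$ is \emph{affine} exactly at $q=2$ and $q=3$, so that the concurrence additivity of Lemma \ref{san} transports to the Tsallis quantities without any loss.

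First I would rewrite the left-hand side. Since $\ket{\psi}_{PQR_1\cdots R_{k-2}}$ is pure and every branch of a GW state excites a single party, the cut $PQ\,|\,R_1\cdots R_{k-2}$ has Schmidt rank $2$; hence by Eq.~(\ref{p3}) and Theorem \ref{tq3}, $T_q(\ket{\psi}_{PQ|R_1\cdots R_{k-2}})=f_q\big(C^2(\rho_{PQ|R_1\cdots R_{k-2}})\big)$. Applying Lemma \ref{san} to the partition $\{PQ,R_1,\dots,R_{k-2}\}$ with distinguished block $PQ$ gives $C^2(\rho_{PQ|R_1\cdots R_{k-2}})=\sum_{i=1}^{k-2}C^2(\rho_{(PQ)R_i})$; a second application of Lemma \ref{san}, now with $R_i$ as the distinguished block, splits each of these as $C^2(\rho_{(PQ)R_i})=C^2(\rho_{PR_i})+C^2(\rho_{QR_i})$ (both partitions share the same $C^2$ of $R_i$ against everything else). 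Thus the left-hand side equals $f_q\big(\sum_i[C^2(\rho_{PR_i})+C^2(\rho_{QR_i})]\big)$.

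Next I would use the explicit values $f_2(x)=x/2$ and $f_3(x)=3x/8$ obtained from Eq.~(\ref{fq}): at $q=2,3$ the map $f_q$ is linear and vanishes at the origin, so it distributes over the sum. By Theorem \ref{tq3} applied to each pairwise marginal, this turns the left-hand side into $\sum_i[T_q(\rho_{PR_i})+T_q(\rho_{QR_i})]$. At this point the subadditivity/triangle relation (\ref{stq}) is the natural device to pass from such pairwise quantities to a one-sided bound, since each $|T_q(\rho_{PR_i})-T_q(\rho_{QR_i})|$ is governed by a joint two-party Tsallis entropy through (\ref{stq}).

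The main obstacle — and the step I would scrutinize hardest — is the sign pattern of the right-hand side. The affine reduction above yields most naturally a \emph{sum} $\sum_i[T_q(\rho_{PR_i})+T_q(\rho_{QR_i})]$, whereas the statement asks for the \emph{difference} $T_q(\rho_{PR_i})-T_q(\rho_{QR_i})$; closing this gap requires genuinely more than Lemma \ref{san}. I would isolate it as a per-index inequality, feed the rank-$2$ spectra $\{b_P+b_{R_i},\,1-b_P-b_{R_i}\}$ and $\{b_Q+b_{R_i},\,1-b_Q-b_{R_i}\}$ into the Araki--Lieb half of (\ref{stq}), verify the resulting scalar inequality directly at $q=2$ and $q=3$, and only then sum over $i$. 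This localisation is where the restriction to the two special values of $q$ is doing the real work, and it is the part of the argument most in need of a careful, independent check.
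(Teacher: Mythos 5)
Your reduction of the left-hand side is correct, and at $q=2$ and $q=3$ it in fact closes into an \emph{equality}: $f_2(x)=x/2$ and $f_3(x)=3x/8$ are linear with $f_q(0)=0$, so your two applications of Lemma \ref{san} give
\begin{align*}
T_q(\ket{\psi}_{PQ|R_1\cdots R_{k-2}})=f_q\Bigl(\sum_{i=1}^{k-2}\bigl[C^2(\rho_{PR_i})+C^2(\rho_{QR_i})\bigr]\Bigr)=\sum_{i=1}^{k-2}\bigl[T_q(\rho_{PR_i})+T_q(\rho_{QR_i})\bigr].
\end{align*}
The ``main obstacle'' you flag at the end is therefore not a gap in your argument but a defect in the statement itself: an inequality of the form $\sum_i[T_q(\rho_{PR_i})+T_q(\rho_{QR_i})]\le\sum_i[T_q(\rho_{PR_i})-T_q(\rho_{QR_i})]$ forces every $T_q(\rho_{QR_i})$ to vanish, and already the symmetric three-qubit $W$ state with $P=A_1$, $Q=A_2$, $R_1=A_3$ violates the printed bound (it reads $4/9\le 0$ at $q=2$). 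No per-index scalar verification at the special values of $q$ can rescue the $\le$ direction, so you should abandon that final step rather than try to carry it out.

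What the paper actually proves is the reverse inequality, by a genuinely different route from yours: it starts from the Tsallis Araki--Lieb bound in (\ref{stq}), namely $\frac{1-\tr\rho_{PQ}^q}{q-1}\ge T_q(\rho_P)-T_q(\rho_Q)$, identifies the left side with $T_q(\ket{\psi}_{PQ|R_1\cdots R_{k-2}})$ by purity of the global state, and then expands $T_q(\rho_P)$ and $T_q(\rho_Q)$ via Lemma \ref{san}, Theorem \ref{tq1} and the additivity $f_q(x^2+y^2)=f_q(x^2)+f_q(y^2)$, ending with $T_q(\ket{\psi}_{PQ|R_1\cdots R_{k-2}})\ge\sum_i[T_q(\rho_{PR_i})-T_q(\rho_{QR_i})]$; the $\le$ in the theorem statement is evidently a sign error. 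Your concurrence-level computation gives this corrected statement immediately---and gives strictly more, since each summand satisfies $T_q(\rho_{PR_i})+T_q(\rho_{QR_i})\ge T_q(\rho_{PR_i})-T_q(\rho_{QR_i})$---so replacing your final paragraph with the one-line remark that $T_q(\rho_{QR_i})\ge 0$ turns your proposal into a complete, and in fact sharper, proof of the result the paper intended.
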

\begin{proof}
	\begin{align}
	T_q(\ket{\psi}_{PQ|R_1R_2\cdots R_{k-2}})=&\frac{1-\tr\rho^q_{PQ}}{q-1}\nonumber\\
	\ge& T_q(\rho_P)-T_q(\rho_Q) \label{stq1}
	\end{align}
	In \cite{san2010tsallis}, the authors showed that when $q=2$ or $q=3,$ $x,y\in [0,1],$ 
	\begin{align}
	f_q(x^2+y^2)=f_q(x^2)+f_q(y^2). \label{san1}
	\end{align}
	 Then by the Lemma \ref{san} and Theorem \ref{tq1}, the (\ref{stq1}) becomes 
	\begin{align}
		T_q(\ket{\psi}_{PQ|R_1R_2\cdots R_{k-2}})=&\frac{1-\tr\rho^q_{PQ}}{q-1}\nonumber\\
	\ge& T_q(\rho_P)-T_q(\rho_Q) \nonumber\\
	=& 	T_q(\ket{\psi}_{P|QR_1R_2\cdots R_{k-2}})-	T_q(\ket{\psi}_{Q|PR_1R_2\cdots R_{k-2}})\nonumber\\
	=&T_q(\rho_{PQ})+\sum_iT_q(\rho_{PR_i})-T_q(\rho_{QP})-\sum_i T_q(\rho_{QR_i})\nonumber\\
	=&\sum_i[T_q(\rho_{PR_i})-T_q(\rho_{QR_i})].
	\end{align}
	Here the first inequality is due to the inequality (\ref{stq}), the second equality is due to the Lemma \ref{san} Theorem \ref{tq1} and (\ref{san1}).\\
\end{proof}\par
\indent Due to the results in Theorem \ref{tq3} and the results in \cite{san2016generalized}, we have the following corollary.
\begin{Corollary}
	Assume $\rho_{A_{j_1}A_{j_2}\cdots A_{j_m}}$ is the reduced density matrix of a GW state $\ket{\psi_{A_1\cdots A_n}},$ and here we denote $\{P_1,P_2,P_3\}$ is a partition of the set $\{A_{j_1},A_{j_2},\cdots,A_{j_m}\},$ when $q\in[1,2]\cup[3,\frac{5+\sqrt{13}}{2}],$ we have the following monogamy inequality,
	\begin{align}
	T_q(\rho_{P_1|P_2P_3})\le T_q(\rho_{P_2|P_1P_3})+T_q(\rho_{P_3|P_1P_2})
	\end{align}\\
\end{Corollary}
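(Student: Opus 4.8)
The plan is to collapse the three Tsallis terms onto a single scalar inequality for the auxiliary function $f_q$ and then settle that inequality using only monotonicity, concavity and the normalization $f_q(0)=0$. First I would apply Theorem \ref{tq3}, which is legitimate because $[1,2]\cup[3,\frac{5+\sqrt{13}}{2}]\subseteq[\frac{5-\sqrt{13}}{2},2]\cup[3,\frac{5+\sqrt{13}}{2}]$, to replace each T$q$EE by $f_q$ evaluated at the squared concurrence of the relevant cut:
\begin{align*}
T_q(\rho_{P_1|P_2P_3})&=f_q(C^2(\rho_{P_1|P_2P_3})),\\
T_q(\rho_{P_2|P_1P_3})&=f_q(C^2(\rho_{P_2|P_1P_3})),\\
T_q(\rho_{P_3|P_1P_2})&=f_q(C^2(\rho_{P_3|P_1P_2})).
\end{align*}

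Next I would feed the block partition $\{P_1,P_2,P_3\}$ into Lemma \ref{san}, exactly as in the proofs of Theorem \ref{T2} and Theorem \ref{T1}, to decompose each bipartite squared concurrence into pairwise pieces. Setting $a=C^2(\rho_{P_1P_2})$, $b=C^2(\rho_{P_1P_3})$ and $c=C^2(\rho_{P_2P_3})$, relation $(\ref{c1})$ gives $C^2(\rho_{P_1|P_2P_3})=a+b$, $C^2(\rho_{P_2|P_1P_3})=a+c$ and $C^2(\rho_{P_3|P_1P_2})=b+c$. Since every cut of a reduced GW state has Schmidt rank two, each of these three sums lies in $[0,1]$, so $f_q$ is only ever evaluated on its natural domain. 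The corollary is therefore equivalent to the scalar statement $f_q(a+b)\le f_q(a+c)+f_q(b+c)$ for $a,b,c\ge0$ with all pairwise sums in $[0,1]$.

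To finish, I would read off $f_q(0)=0$ directly from $(\ref{fq})$ and invoke Lemma \ref{l2}, which on the stated range makes $f_q$ increasing and concave. Concavity through the origin yields subadditivity: from $f_q(a)\ge\frac{a}{a+b}f_q(a+b)$ and $f_q(b)\ge\frac{b}{a+b}f_q(a+b)$ one adds to get $f_q(a+b)\le f_q(a)+f_q(b)$. Monotonicity and $c\ge0$ then give $f_q(a)\le f_q(a+c)$ and $f_q(b)\le f_q(b+c)$, so that $f_q(a+b)\le f_q(a)+f_q(b)\le f_q(a+c)+f_q(b+c)$; pushing this back through Theorem \ref{tq3} is the claim. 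This subadditivity-plus-monotonicity estimate is precisely the mechanism of \cite{san2016generalized}.

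The step I would watch most carefully is the reduction, not the final estimate: one has to confirm that the three matrices in the statement are the single reduced state $\rho_{A_{j_1}\cdots A_{j_m}}$ read under three different bipartitions, so that Lemma \ref{san} applies verbatim and returns a single triple $(a,b,c)$ of pairwise concurrences shared by all three expansions. Once that bookkeeping is fixed there is no delicate dependence on $q$, because the corollary's range lies strictly inside the common domain of validity of Theorem \ref{tq3} and Lemma \ref{l2}, and the slack $f_q(a+c)+f_q(b+c)-f_q(a+b)\ge0$ holds with room to spare.
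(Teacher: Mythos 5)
Your proof is correct, and it is more explicit than what the paper provides: for this corollary the paper gives no argument at all, merely asserting that it follows from Theorem \ref{tq3} together with the generalized monogamy relations of \cite{san2016generalized}. What you have done is reconstruct, inside the paper's own toolkit, the mechanism that the citation is standing in for: Theorem \ref{tq3} converts all three terms to $f_q$ of squared concurrences, Lemma \ref{san} identifies those squared concurrences with $a+b$, $a+c$, $b+c$ for a single triple of pairwise concurrences, and then subadditivity of $f_q$ (from concavity plus $f_q(0)=0$, both of which you verify) combined with monotonicity yields $f_q(a+b)\le f_q(a)+f_q(b)\le f_q(a+c)+f_q(b+c)$. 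This buys a self-contained proof valid for the full stated range of $q$, whereas the paper's reader must chase the external reference and check that its qubit-level argument transfers. The one step you share with the paper rather than improve on is the application of Lemma \ref{san} (and of Theorems \ref{tq1}--\ref{tq3}) to a partition of a \emph{subset} of the parties, i.e.\ to the mixed reduced state $\rho_{A_{j_1}\cdots A_{j_m}}$ under three different bipartitions: Lemma \ref{san} as stated concerns a partition of the full set of subsystems of the pure GW state, and the extension to reduced states is used implicitly in the proofs of Theorems \ref{T2} and \ref{T1} as well, so your bookkeeping worry at the end is the right one to flag, but your usage is consistent with the paper's own.
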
\par
 \indent We also can generalize the above result to the following corollary.
\begin{Corollary}
Assume $\rho_{A_{j_1}A_{j_2}\cdots A_{j_m}}$ is the reduced density matrix of a GW state $\ket{\psi_{A_1\cdots A_n}},$ and here we denote $\{P_1,P_2,Q_1,Q_2,\cdots,Q_k\}$ is a partition of the set $\{A_{j_1},A_{j_2},\cdots,A_{j_m}\},$ when $q\in[1,2]\cup[3,\frac{5+\sqrt{13}}{2}],$ we have the following monogamy inequality,
\begin{align}
T_q(\rho_{P_1P_2|Q_1\cdots Q_k})\le 2T_q(\rho_{P_2|P_1})+\sum_{i=1}^kT_q(\rho_{P_1|Q_i})+\sum_{i=1}^kT_q(\rho_{P_2|Q_i}).
\end{align}
\end{Corollary}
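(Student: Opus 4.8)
The plan is to reduce the two-versus-many cut on the left to two one-versus-many cuts, by first regarding the whole block $Q_1\cdots Q_k$ as a single party and invoking the preceding (three-party) generalized monogamy corollary, and then splitting each resulting one-versus-many term with the polygamy relation of Theorem \ref{T1}. The key to matching the stated right-hand side is to perform these two reductions in the correct order.

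First I would group $R:=Q_1\cdots Q_k$ into a single subsystem, so that $\{P_1,P_2,R\}$ is a three-block partition of $\{A_{j_1},\dots,A_{j_m}\}$; since the bipartite Tsallis-$q$ entanglement across a cut of the (generally mixed) reduced state depends only on the cut and not on which side is written first, $T_q(\rho_{P_1P_2|Q_1\cdots Q_k})=T_q(\rho_{R|P_1P_2})$. Applying the generalized monogamy corollary just established, with the distinguished party taken to be $R$, then gives
\begin{align}
T_q(\rho_{P_1P_2|Q_1\cdots Q_k})\le T_q(\rho_{P_1|P_2 Q_1\cdots Q_k})+T_q(\rho_{P_2|P_1 Q_1\cdots Q_k}),
\end{align}
which is valid for $q\in[1,2]\cup[3,\frac{5+\sqrt{13}}{2}]$.

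Next I would apply the polygamy relation of Theorem \ref{T1} to each term on the right, using the full partition $\{P_1,P_2,Q_1,\dots,Q_k\}$ with $P_1$ (respectively $P_2$) as the distinguished block:
\begin{align}
T_q(\rho_{P_1|P_2 Q_1\cdots Q_k})&\le T_q(\rho_{P_1P_2})+\sum_{i=1}^k T_q(\rho_{P_1|Q_i}),\\
T_q(\rho_{P_2|P_1 Q_1\cdots Q_k})&\le T_q(\rho_{P_1P_2})+\sum_{i=1}^k T_q(\rho_{P_2|Q_i}).
\end{align}
Theorem \ref{T1} holds on $[\frac{5-\sqrt{13}}{2},2]\cup[3,\frac{5+\sqrt{13}}{2}]$, which contains the range above, so both steps are simultaneously valid and the final range $[1,2]\cup[3,\frac{5+\sqrt{13}}{2}]$ is exactly the intersection. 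Adding the two inequalities and using $T_q(\rho_{P_1P_2})=T_q(\rho_{P_2|P_1})$ produces precisely the claimed bound.

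The only genuine obstacle is bookkeeping the coefficient in front of $T_q(\rho_{P_2|P_1})$. Applying the three-party corollary first yields exactly two one-versus-many terms, each of which contributes a single copy of $T_q(\rho_{P_1P_2})$ after polygamy, giving the coefficient $2$; reversing the order (splitting the $Q_i$ first) would instead generate one $T_q(\rho_{P_1P_2})$ for every $i$ and hence the much weaker coefficient $2k$. As a consistency check, the GW concurrence sum rules (Lemma \ref{san}), applied across the merged cut, give $C^2(\rho_{P_1P_2|Q_1\cdots Q_k})=\sum_{i=1}^k[C^2(\rho_{P_1|Q_i})+C^2(\rho_{P_2|Q_i})]$ with no internal $P_1P_2$ contribution, so that subadditivity of $f_q$ (Lemma \ref{l2}) alone already forces the inequality; the stated form is the one produced by the black-box argument above and is merely weaker by the harmless nonnegative term $2T_q(\rho_{P_2|P_1})$.
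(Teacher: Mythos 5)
Your derivation is correct and is evidently the route the paper intends (the corollary is stated without proof as a generalization of the preceding three-party one): merge $Q_1\cdots Q_k$ into a single block, apply that corollary to get $T_q(\rho_{P_1P_2|Q_1\cdots Q_k})\le T_q(\rho_{P_1|P_2Q_1\cdots Q_k})+T_q(\rho_{P_2|P_1Q_1\cdots Q_k})$, then split each term with Theorem~\ref{T1}; the bookkeeping, including the coefficient $2$ on $T_q(\rho_{P_2|P_1})$ and the validity range $[1,2]\cup[3,\frac{5+\sqrt{13}}{2}]$ as the intersection of the two ingredients' ranges, all checks out. The only caveat is your closing ``consistency check'': the identity $C^2(\rho_{P_1P_2|Q_1\cdots Q_k})=\sum_{i}[C^2(\rho_{P_1|Q_i})+C^2(\rho_{P_2|Q_i})]$ does not follow from Lemma~\ref{san} as stated (that lemma decomposes a one-block-versus-rest cut of the pure GW state, not the cut $Q_i|(P_1P_2)$ of a mixed tripartite marginal), but this remark is inessential and does not affect your main argument.
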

\subsection{POLYGAMY RELATIONS IN TERMS OF TQEEOA FOR THE SUPERPOSITION OF THE GENERALISED W CLASS STATES AND VACUUM}
	\indent In this subsection, we present polygamy relations for the GWV states in terms of T$q$EEoA when q=2. Andin this subsection, we always note that $q=2$.  Assume $\ket{\psi}_{ABC}$ is a GWV state, then
by the results in \cite{shi2018monogamy}
\begin{align}
T_2(\ket{\psi}_{A|BC})\le T_2(\rho_{A|B})+T_2(\rho_{AC}).\label{ftqa}
\end{align}
Assume that $\rho_{ABC}$ is a reduced density matrix of a GWV state $\ket{\psi}_{ABCD},$ then we have 
\begin{align}
& T^a_q(\rho_{A|BC})\nonumber\\
=&\max \sum_i p_i T_q(\ket{\psi_i}_{A|BC})\nonumber\\
\le &\sum_i[ p_i T_q(\rho^i_{A|B})+p_iT_q(\rho^i_{A|C})]\nonumber\\
\le &T_q(\rho_{A|B})+T_q(\rho_{A|C}),
\end{align}
where in the first equality, we use the defintion of T$q$EEoA, in the first inequality, we use the inequality (\ref{ftqa}), the second inequality is due to the linearity of the operation of the partial trace,
\begin{align}
\sum_i p_i\rho_{AB}^i=\rho_{AB},
\end{align}
and we also use the definition of the T$q$EEoA,
\begin{align}
\sum_i p_iT_q(\rho_{A|B}^i)\le T_q^a(\rho_{A|B}),\hspace{2mm} \sum_i p_iT_q(\rho_{A|C}^i)\le T_q^a(\rho_{A|C}).
\end{align}
Then we have the following theorem.
\begin{Theorem}\label{tqa}
	Assume $\rho_{A_{j_1}\cdots A_{j_{m}}}$ is the reduced density matrix of a GWV state $\ket{\psi_{A_1A_2\cdots A_n}}$, here we denote that $\{P_1,P_2,\cdots,P_k\}$ is a partition of the set $\{j_1,j_2,\cdots,j_m\}$. When $q=2,$ we have the following polygamy inequality: 
	\begin{align}
	T_q^a(\rho_{P_1|P_2\cdots P_k})\le \sum_{i=2}^k T_q^a(\rho_{P_1|P_i}).
	\end{align}
\end{Theorem}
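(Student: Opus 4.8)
\emph{Proof sketch.} The plan is to establish the polygamy inequality by induction on the number $k$ of blocks in the partition $\{P_1,P_2,\dots,P_k\}$, taking as the base case the three-party inequality $T_q^a(\rho_{P_1|P_2P_3})\le T_q^a(\rho_{P_1|P_2})+T_q^a(\rho_{P_1|P_3})$ that is derived for reduced density matrices of GWV states immediately before the theorem. (I note in passing that the last line of that derivation should read $T_q^a(\rho_{A|B})+T_q^a(\rho_{A|C})$ rather than $T_q(\rho_{A|B})+T_q(\rho_{A|C})$, since the two displayed definitional bounds $\sum_i p_i T_q(\rho^i_{A|B})\le T_q^a(\rho_{A|B})$ are exactly what closes the chain; I would repair this minor typo before invoking it.) For $k=2$ the claim is the tautology $T_q^a(\rho_{P_1|P_2})\le T_q^a(\rho_{P_1|P_2})$, and for $k=3$ it is precisely the base inequality, so the induction can start at $k=3$.

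For the inductive step, I assume the inequality holds for every partition into $k-1$ blocks and consider a partition into $k$ blocks. First I would regard $P_3\cdots P_k$ as a single party $C$, $P_2$ as $B$, and $P_1$ as $A$; since $\rho_{P_1P_2\cdots P_k}$ is a reduced density matrix of the GWV state $\ket{\psi}_{A_1\cdots A_n}$, the three-party base inequality applies and gives
\begin{align}
T_q^a(\rho_{P_1|P_2\cdots P_k})\le T_q^a(\rho_{P_1|P_2})+T_q^a(\rho_{P_1|P_3\cdots P_k}).
\end{align}
The bipartition label $P_1$ versus the rest is unchanged by the coarse-graining of $C$, so the left-hand side is exactly the quantity to be bounded. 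Next, $\rho_{P_1P_3\cdots P_k}$ is itself a reduced density matrix of the same GWV state (a further partial trace of $\rho_{A_{j_1}\cdots A_{j_m}}$), so the induction hypothesis applies to the $(k-1)$-block partition $\{P_1,P_3,\dots,P_k\}$ and yields $T_q^a(\rho_{P_1|P_3\cdots P_k})\le \sum_{i=3}^{k}T_q^a(\rho_{P_1|P_i})$. Substituting this into the previous display gives $T_q^a(\rho_{P_1|P_2\cdots P_k})\le \sum_{i=2}^{k}T_q^a(\rho_{P_1|P_i})$, completing the induction.

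The step I expect to require the most care is verifying that the GWV structure is preserved at every stage, since the base inequality was proved only for reduced density matrices of GWV states and not for arbitrary states. This is where Lemma \ref{CK} is essential: it guarantees that any pure-state decomposition of a reduced density matrix of a GWV state again consists of GWV (Schmidt-rank-$2$) states, so both the regrouped tripartition $\rho_{P_1P_2(P_3\cdots P_k)}$ and the contracted state $\rho_{P_1P_3\cdots P_k}$ inherit the hypotheses needed to apply, respectively, the base inequality and the induction hypothesis. In particular I would check that the underlying pure-state bound (\ref{ftqa}) remains valid when $B$ and $C$ denote \emph{groups} of qudits, which follows because the concurrence relations of Lemma \ref{san} hold for an arbitrary partition. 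I would make this inheritance explicit at each step, so that the maximization defining $T_q^a$ genuinely ranges over decompositions of the correct GWV form; this is exactly what lets the assistance quantity behave subadditively enough for the chain of bounds to close.
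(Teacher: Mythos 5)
Your proposal is correct and takes essentially the same route as the paper: the paper only writes out the three-party bound $T_q^a(\rho_{A|BC})\le T_q^a(\rho_{A|B})+T_q^a(\rho_{A|C})$ (via the pure-state inequality (\ref{ftqa}), Lemma \ref{CK}, and the definition of $T_q^a$) and leaves the extension to $k$ blocks implicit, which your induction on $k$ simply makes explicit. You are also right that the last line of the paper's displayed derivation should read $T_q^a(\rho_{A|B})+T_q^a(\rho_{A|C})$ rather than $T_q(\rho_{A|B})+T_q(\rho_{A|C})$.
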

\indent Recently, results on the tighter monogamy inequalities in terms of concurrence \cite{jin2018tighter}, negativity \cite{jin2018tighter} for n-qubit systems and the entanglement of assistance for arbitrary dimensional systems \cite{San2018} are proposed. However, the results on the study of high dimensional systems are less, next we present a tighter polygamy inequality for the GW states.
\begin{Lemma}\label{lqa1}
	Let $\beta\in [0,1],x\in [0,1],$ then we have 
	\begin{align}
	(1+x)^{\beta}\le 1+(2^{\beta}-1)x^{\beta}
	\end{align}
\end{Lemma}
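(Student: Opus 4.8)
The plan is to mirror the argument already used for the earlier lemma (the one treating $g_x(t)$ on $t\ge 1$): reduce the claimed inequality to the monotonicity of a single quotient, examined this time on the interval $(0,1]$ rather than on $[1,\infty)$. Concretely, for $\beta\in(0,1]$ I would set
\begin{align}
g_\beta(x)=\frac{(1+x)^\beta-1}{x^\beta},\qquad x\in(0,1].
\end{align}
Since the target inequality is exactly $(1+x)^\beta-1\le(2^\beta-1)x^\beta$, and since $g_\beta(1)=2^\beta-1$, it suffices to show that $g_\beta$ is nondecreasing on $(0,1]$; then $g_\beta(x)\le g_\beta(1)=2^\beta-1$ for every $x$ in this range, which is the assertion.

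First I would differentiate. A direct computation, cancelling the cross term through the identity $(1+x)^{\beta-1}x-(1+x)^\beta=-(1+x)^{\beta-1}$, gives
\begin{align}
g_\beta'(x)=\frac{\beta x^{\beta-1}\left[1-(1+x)^{\beta-1}\right]}{x^{2\beta}}.
\end{align}
The sign is then transparent: because $\beta\le 1$ the exponent $\beta-1$ is nonpositive, so $(1+x)^{\beta-1}\le 1$ for $x\ge 0$, whence $1-(1+x)^{\beta-1}\ge 0$; combined with $\beta x^{\beta-1}\ge 0$ this forces $g_\beta'(x)\ge 0$, and $g_\beta$ is nondecreasing on $(0,1]$.

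Monotonicity immediately yields $g_\beta(x)\le g_\beta(1)=2^\beta-1$, i.e. $(1+x)^\beta\le 1+(2^\beta-1)x^\beta$ for $x\in(0,1]$; the degenerate cases $\beta=0$ and $\beta=1$ give equality and are checked by inspection. The step I expect to demand the most care is not any individual estimate — the whole argument is elementary — but the bookkeeping at the endpoint $x=0$, where $g_\beta$ is undefined: I would either extend it by the limit $\lim_{x\to 0^+}g_\beta(x)=0\le 2^\beta-1$ or, more cleanly, verify the original inequality directly at $x=0$ (both sides equal $1$) instead of through the quotient. It is worth noting that this lemma is the exact companion of the earlier one: both are instances of the single monotone profile $\frac{(1+s)^\beta-1}{s^\beta}$, read off on opposite sides of $s=1$, so the two proofs are structurally identical up to the choice of interval.
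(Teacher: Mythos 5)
Your proof is correct and is essentially the paper's argument: the paper substitutes $t=1/x$ and shows the difference $(1+t)^{\beta}-t^{\beta}$ is decreasing on $[1,\infty)$ with maximum $2^{\beta}-1$ at $t=1$, while you show the equivalent quotient $\frac{(1+x)^{\beta}-1}{x^{\beta}}$ is increasing on $(0,1]$ with supremum $2^{\beta}-1$ at $x=1$ --- the same one-variable monotonicity argument, differing only by the change of variable. Your derivative computation and the endpoint checks at $x=0$ and $\beta\in\{0,1\}$ are sound (and your treatment of $x=0$ is in fact more careful than the paper's).
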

\begin{proof}
	let $t=\frac{1}{x},$ then the lemma is equivalent to get the maximum of $f(t)$ when t$\in[1,\infty),$
	\begin{align}
	f(t)=(1+t)^{\beta}-t^{\beta}.
	\end{align}
	As $t\in [1,\infty),$ and $f^{'}(t)\le 0,$ that is, when $t=1,$ $f(t)$ get the maximum $2^t-1.$ At last, When we replace t with $\frac{1}{x},$ we finish the proof. 
\end{proof}
\indent We know that any number $j\in \mathbb{N}^{+}$ can be written as
\begin{align}
j=\sum_{i=0}^{n-1}j_i 2^i,\label{tqa1}
\end{align}
here we assume $\log_2\le n,j_i\in {0,1}.$ According to the euqality (\ref{tqa1}), we have the following bijection:
\begin{align}
j\rightarrow &\vec{j}\nonumber\\
j\rightarrow &(j_0,j_1,\cdots,j_{n-1}),\nonumber
\end{align} 
then we denote its Hamming distance $w_H(\vec{j})$ as the number 1 of the set $\{j_0,j_1,\cdots,j_{n-1}\}$. Next we present the tighter polygamy inequality of the GW states in terms of T$q$EEoA.
\begin{Theorem}\label{tqa5}
	Let $\beta\in [0,1],$ when $q=2,$ assume $\rho_{PP_{j_0}\cdots P_{j_{m-1}}}$ is a reduced density matrix of a GWV state $\ket{\psi_{AA_1\cdots A_{n-1}}}$, then there exists an appropriate order of $P_{j_0}, P_{j_1}$ $\cdots$ $P_{j_{m-1}}$ such that
	\begin{align}
	[T_q^a(\rho_{P|P_{j_0}\cdots P_{j_{m-1}}})]^{\beta}\le \sum_{i=0}^{m-1}(2^{\beta}-1)^{w_H(\vec{j_i})}[T_q^a(\rho_{P|P_{j_i}})]^{\beta}
	\end{align}
\end{Theorem}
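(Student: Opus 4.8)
The plan is to combine the linear polygamy inequality of Theorem \ref{tqa} (which is available at $q=2$) with the power estimate of Lemma \ref{lqa1}, boosting the sum to its $\beta$-th power by a dyadic induction. First I would apply Theorem \ref{tqa} to the reduced density matrix $\rho_{PP_{j_0}\cdots P_{j_{m-1}}}$, treating $P$ as the distinguished block and each $P_{j_i}$ as a single block of the partition, to obtain
\[
T_q^a(\rho_{P|P_{j_0}\cdots P_{j_{m-1}}})\le \sum_{i=0}^{m-1}T_q^a(\rho_{P|P_{j_i}}).
\]
Since $t\mapsto t^{\beta}$ is monotonically increasing for $\beta\in[0,1]$, raising both sides to the power $\beta$ reduces the theorem to the purely numerical inequality $\left(\sum_{i=0}^{m-1}T_i\right)^{\beta}\le \sum_{i=0}^{m-1}(2^{\beta}-1)^{w_H(\vec{j_i})}T_i^{\beta}$, where I abbreviate $T_i=T_q^a(\rho_{P|P_{j_i}})$. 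The phrase ``appropriate order'' is precisely what lets me relabel the subsystems so that $T_0\ge T_1\ge\cdots\ge T_{m-1}$, identifying the $i$-th subsystem with the binary string $\vec{j_i}=\vec{i}$ of weight $w_H(i)$.

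The heart of the argument is the two-term estimate extracted from Lemma \ref{lqa1}: if $a\ge b\ge 0$, then setting $x=b/a\in[0,1]$ gives
\[
(a+b)^{\beta}=a^{\beta}(1+x)^{\beta}\le a^{\beta}\bigl(1+(2^{\beta}-1)x^{\beta}\bigr)=a^{\beta}+(2^{\beta}-1)b^{\beta}.
\]
I would first pad the list with zeros, setting $T_i=0$ for $m\le i<2^n$ with $2^n\ge m$; since $0^{\beta}=0$ and the zeros are the smallest entries, they occupy the highest-weight positions and contribute nothing, so it suffices to prove the inequality for $2^n$ sorted terms. I then proceed by induction on $n$, the base case $n=0$ being trivial. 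Splitting the sorted list into its larger half $A=\sum_{i=0}^{2^{n-1}-1}T_i$ and smaller half $B=\sum_{i=0}^{2^{n-1}-1}T_{2^{n-1}+i}$, the descending ordering gives $A\ge B$ termwise and hence $A\ge B$, so the two-term estimate yields $(A+B)^{\beta}\le A^{\beta}+(2^{\beta}-1)B^{\beta}$. Applying the inductive hypothesis to $A^{\beta}$ and to $B^{\beta}$ separately and using the bit-bookkeeping identity $w_H(2^{n-1}+i)=1+w_H(i)$ to absorb the extra factor $(2^{\beta}-1)$ into the coefficients of the second half completes the induction.

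The main obstacle I anticipate is combinatorial rather than analytic: keeping the Hamming-weight coefficients consistent through the recursion, and verifying that the descending ordering of the $T_i$ is preserved under the dyadic split, since this is exactly what guarantees $A\ge B$ and hence the applicability of Lemma \ref{lqa1} at every stage. One must also check that padding with zeros does not disturb the ordering, which holds because zero is the minimal value, and that $\beta>0$ is used so that $0^{\beta}=0$ and $t\mapsto t^{\beta}$ is monotone. No further property of the Tsallis-$q$ entanglement of assistance enters beyond the linear polygamy relation of Theorem \ref{tqa}, so the entire analytic content is concentrated in Lemma \ref{lqa1}.
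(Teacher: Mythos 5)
Your proposal is correct and follows essentially the same route as the paper: reduce to a purely numerical inequality via Theorem \ref{tqa}, sort the terms in descending order, apply the two-term estimate from Lemma \ref{lqa1}, and run a dyadic induction with Hamming-weight bookkeeping. The only cosmetic difference is that you pad the sorted list with numerical zeros after the reduction, whereas the paper pads at the state level by tensoring with vacuum subsystems to stay inside its set $B$; both devices serve the same purpose and yours is arguably a little cleaner.
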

\begin{proof}
	In the process of the proof, we always order the partite $P_{j_0}, P_{j_1}$ $\cdots$ $P_{j_{m-1}}$ such that 
	\begin{align}
	T_q^a(\rho_{P|P_{j_i}})\ge T_q^a(\rho_{P|P_{j_{i+1}}}),i=0,1,\cdots,m-1.\label{tqa2}
	\end{align}
	Here we denote that the set $$A=\{\rho_{PP_{j_0}\cdots P_{j_{m-1}}}|\rho_{PP_{j_0}\cdots P_{j_{m-1}}} \textit{is a reduced density matrix of a GW state}\},$$ $$B=\{\rho_{PP_{j_0}\cdots P_{j_{m-1}}}=\gamma_{PP_{j_0}\cdots P_{j_{k-1}}}\otimes\ket{0_{m-k}}\bra{0_{m-k}}|\gamma_{PP_{j_0}\cdots P_{j_{m-1}}} \textit{is a reduced density matrix of a GW state}\}$$
	\indent Then we will prove the elements in the set $A\cup B$ is valid for the inequality (\ref{tqa2}).\\
	\indent 	Due to the Theorem \ref{tqa} and the definition of the set B, it is enough to prove 
	\begin{align}
	[\sum_{i=0}^{m-1}T_q^a(\rho_{P|P_{j_i}})]^{\beta}\le\sum_{i=0}^{m-1}(2^{\beta}-1)^{w_H(\vec{j_i})}[T_q^a(\rho_{P|P_{j_i}})]^{\beta}\nonumber
	\end{align}
	\indent	First we prove the theorem is correct when a tripartite mixed state $\rho_{ABC}$ is a reduced density matrix of a GWV state $\ket{\psi_{AA_1\cdots A_{n-1}}}$,
	\begin{align}
	& (T_q^a(\rho_{P|P_0P_1}))^{\beta}\nonumber\\
	\le &(T_q^a(\rho_{P|P_0}+T_q(\rho_{P|P_1}))^{\beta}\nonumber\\
	= &[T_q^a(\rho_{P|P_0})]^{\beta}\left[1+[\frac{T_q^a(\rho_{P|P_1})}{T_q^a(\rho_{P|P_0})}]\right]^{\beta}\nonumber\\
	\le &(T_q^a(\rho_{P|P_0}))^{\beta}+(2^{\beta}-1)(T_q^{a}(\rho_{P|P_1}))^{\beta},\label{tqa3}
	\end{align}
	here the first inequality is due to the Theorem \ref{tqa}, and when $a>c>0,b>0,$ $a^b>c^b,$ and the second inequality is due to the Lemma \ref{lqa1}.\\
	\indent Then we use the mathematical induction. First let us assume when $m<2^{n},$ the theorem is correct. Then we have when $m=2^n$, from the inequality (\ref{tqa3}), we have   
	\begin{align}
	&[\sum_{i=0}^{m-1}T_q^a(\rho_{P|P_{j_i}})]^{\beta}\nonumber\\
	=&\sum_{i=0}^{m/2-1}T_q^a(\rho_{P|P_{j_i}})^{\beta}[1+\frac{[\sum_{i=\frac{m}{2}}^{m-1}T_q^a(\rho_{P|P_{j_i}})]^{\beta}}{[\sum_{i=0}^{\frac{m}{2}-1}T_q^a(\rho_{P|P_{j_i}})]^{\beta}}]\nonumber\\
	\le &\sum_{i=0}^{\frac{m}{2}-1}[T_q^a(\rho_{P|P_{j_i}})]^{\beta}+(2^{\beta}-1)\sum_{i=\frac{m}{2}}^{m-1}[T_q^a(\rho_{P|P_{j_i}})]^{\beta}\nonumber\\
	\le& \sum_{i=0}^{\frac{m}{2}-1}(2^{\beta}-1)^{w_H(\vec{j_i})}(T_q^a(\rho_{P|P_{j_i}}))^{\beta}+\sum_{i=\frac{m}{2}}^{m-1}(2^{\beta}-1)*(2^{\beta}-1)^{w_H(\vec{j_i})-1}(T_q^a(\rho_{P|P_{j_i}}))^{\beta}\nonumber\\
	\le& \sum_{i=0}^{m-1}(2^{\beta}-1)^{w_H(\vec{j_i})}[T_q^a(\rho_{P|P_{j_i}})]^{\beta} \label{tqa4}
	\end{align}
	\indent When $m$ is an arbitrary number, we always can choose an $n\in \mathbb{N}^{+}$ such that $2^{n-1}\le m\le 2^n.$ Then we choose a $2^n+1$ party quantum state in the set B,
	\begin{align}
	\gamma_{PP_{j_0}\cdots P_{j_{2^n-1}}}=\rho_{PP_{j_0}\cdots P_{j_{m-1}}}\otimes \ket{0}_{2^n-m}\bra{0}.
	\end{align}
	\indent	Then due to the inequality (\ref{tqa4}), we have 
	\begin{align}
	\sum_{i=0}^{m-1}[T_q^a(\gamma_{P|P_{j_i}})]^{\beta}\le \sum_{i=0}^{m-1}(2^{\beta}-1)^{w_H(\vec{j_i})}[T_q^a(\gamma_{P|P_{j_i}})]^{\beta}
	\end{align}
	\indent From the definition of the state $\gamma_{PP_{j_0}\cdots P_{j_{2^n-1}}}$, we have 
	\begin{align}
	T_q^a(\gamma_{P|P_{j_0}\cdots P_{j_{2^n-1}}})=& T_q^a(\rho_{P|P_{j_0}\cdots P_{j_{m-1}}})\\
	T_q^a(\gamma_{P|P_{j_i}})=& T_q^a(\rho_{P|P_{j_i}}),i=0,1,\cdots,m-1,\\
	T_q^a(\gamma_{P|P_{j_i}})=& 0,i=m,m+1,\cdots,2^n-1,
	\end{align}
	\indent then we have
	\begin{align}
	&[T_q^a(\gamma_{P|P_{j_0}\cdots P_{j_{2^n-1}}})]^{\beta}\nonumber\\
	=& [T_q^a(\rho_{P|P_{j_0}\cdots P_{j_{m-1}}})]^{\beta}\nonumber\\
	\le& \sum_{i=0}^{2^n-1}(2^{\beta}-1)^{w_H(\vec{j_i})}[T_q^a(\gamma_{P|P_{j_i}})]^{\beta}\nonumber\\
	=&\sum_{i=0}^{m-1}(2^{\beta}-1)^{w_H(\vec{j_i})}[T_q^a(\rho_{P|P_{j_i}})]^{\beta}
	\end{align}
\end{proof}
\indent In the proof of the Theroem \ref{tqa5}, as we assume that $\beta\in [0,1],$ then 
\begin{Corollary}
	Let $\beta\in[0,1] $ and $\rho_{PP_{j_0}\cdots P_{j_{m-1}}}$ is a reduced density matrix of a GW state $\ket{\psi_{PP_1\cdots P_{n-1}}}$, when $q=2,$ then we have
	\begin{align}
	[T_q^a(\rho_{P|P_{j_0}\cdots P_{j_{m-1}}})]^{\beta}\le \sum_{i=0}^{m-1}[T_q^a(\rho_{P|P_{j_i}})]^{\beta}
	\end{align} 
\end{Corollary}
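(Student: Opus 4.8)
The plan is to obtain this corollary as an immediate weakening of Theorem \ref{tqa5}. First I would observe that a GW state is nothing but a GWV state with vanishing vacuum weight, that is, the case $p=1$ in the superposition $\sqrt{p}\ket{W_n^d}+\sqrt{1-p}\ket{00\cdots 0}$. Hence the reduced density matrix $\rho_{PP_{j_0}\cdots P_{j_{m-1}}}$ of a GW state $\ket{\psi_{PP_1\cdots P_{n-1}}}$ lies within the scope of Theorem \ref{tqa5}, and applying that theorem (for the same favourable ordering of the parties) gives, when $q=2$,
\begin{align}
[T_q^a(\rho_{P|P_{j_0}\cdots P_{j_{m-1}}})]^{\beta}\le \sum_{i=0}^{m-1}(2^{\beta}-1)^{w_H(\vec{j_i})}[T_q^a(\rho_{P|P_{j_i}})]^{\beta}.
\end{align}

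Next I would bound the weight factors that appear on the right-hand side. Since $\beta\in[0,1]$, we have $1\le 2^{\beta}\le 2$, so that $0\le 2^{\beta}-1\le 1$. Because each Hamming weight $w_H(\vec{j_i})$ is a nonnegative integer and any base lying in $[0,1]$ raised to a nonnegative integer power remains in $[0,1]$, it follows that $(2^{\beta}-1)^{w_H(\vec{j_i})}\le 1$ for every $i$. As every $T_q^a(\rho_{P|P_{j_i}})\ge 0$ forces $[T_q^a(\rho_{P|P_{j_i}})]^{\beta}\ge 0$, replacing each factor $(2^{\beta}-1)^{w_H(\vec{j_i})}$ by the larger quantity $1$ can only enlarge the sum, whence
\begin{align}
\sum_{i=0}^{m-1}(2^{\beta}-1)^{w_H(\vec{j_i})}[T_q^a(\rho_{P|P_{j_i}})]^{\beta}\le \sum_{i=0}^{m-1}[T_q^a(\rho_{P|P_{j_i}})]^{\beta}.
\end{align}
Chaining the two displayed inequalities yields the claimed bound.

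There is essentially no genuine obstacle here, as the corollary is strictly weaker than Theorem \ref{tqa5}. The single point that warrants care is the elementary estimate $(2^{\beta}-1)^{w_H(\vec{j_i})}\le 1$, which relies crucially on the hypothesis $\beta\le 1$: for $\beta>1$ one would have $2^{\beta}-1>1$ and the factors could exceed $1$, so the simple sum bound need not hold. I would therefore emphasize that $\beta\in[0,1]$ is exactly the range in which dropping the Hamming-weight factors is legitimate.
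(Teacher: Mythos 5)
Your argument is correct and matches the paper's intent: the paper states this corollary as an immediate consequence of Theorem \ref{tqa5} for $\beta\in[0,1]$, which is exactly your observation that each coefficient $(2^{\beta}-1)^{w_H(\vec{j_i})}\le 1$ so the weighted sum is dominated by the plain sum. Your additional remark that a GW state is the $p=1$ case of a GWV state correctly justifies applying the theorem, so nothing is missing.
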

\indent At last, we present a tighter polygamy relation in terms of T$q$EEoA for GW states under some conditions we present.
\begin{Theorem}
	When $q=2,$	let $\beta\in[0,1]$ and $\rho_{PP_{j_0}\cdots P_{j_{m-1}}}$ is a reduced density matrix of a GWV state $\ket{\psi_{AA_1\cdots A_{n-1}}}$ , then When 
	\begin{align}
	T_q^a(\rho_{P|P_{j_i}})\ge \sum_{k=i+1}^{m-1} T_q^a(\rho_{P|P_{j_k}}),\label{ftqa0}
	\end{align} 
	we have
	\begin{align}
	[T_q^a(\rho_{P|P_{j_0}\cdots P_{j_{m-1}}})]^{\beta}\le \sum_{i=0}^{m-1}(2^{\beta}-1)^{i}[T_q^a(\rho_{P|P_{j_i}})]^{\beta}.
	\end{align}
\end{Theorem}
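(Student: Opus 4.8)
The plan is to collapse the multipartite quantity to a sum of bipartite terms by Theorem \ref{tqa}, and then run an induction on the number $m$ of parties in which the nested ordering hypothesis (\ref{ftqa0}) is exactly what licenses each application of Lemma \ref{lqa1}. Abbreviating $a_i:=T_q^a(\rho_{P|P_{j_i}})$, Theorem \ref{tqa} gives $T_q^a(\rho_{P|P_{j_0}\cdots P_{j_{m-1}}})\le \sum_{i=0}^{m-1} a_i$, and since $t\mapsto t^{\beta}$ is increasing for $\beta\in[0,1]$ it suffices to establish the purely numerical inequality $\bigl(\sum_{i=0}^{m-1} a_i\bigr)^{\beta}\le \sum_{i=0}^{m-1}(2^{\beta}-1)^{i} a_i^{\beta}$ under the standing assumption $a_i\ge \sum_{k=i+1}^{m-1} a_k$ for every $i$.

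First I would dispose of the base case $m=1$, where the asserted bound is the trivial equality $a_0^{\beta}=a_0^{\beta}$. For the inductive step I would peel off the leading term, writing $\sum_{i=0}^{m-1} a_i = a_0 + S$ with $S=\sum_{i=1}^{m-1} a_i$. The hypothesis (\ref{ftqa0}) at $i=0$ reads precisely $a_0\ge S$, so $S/a_0\in[0,1]$ (the degenerate case $a_0=0$ forces every $a_i=0$ and is vacuous). Applying Lemma \ref{lqa1} with $x=S/a_0$ then yields $(a_0+S)^{\beta}=a_0^{\beta}(1+S/a_0)^{\beta}\le a_0^{\beta}+(2^{\beta}-1)S^{\beta}$.

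The key observation is that the tail $a_1,\dots,a_{m-1}$ again satisfies the same nested condition: for each $1\le i\le m-1$ one still has $a_i\ge \sum_{k=i+1}^{m-1} a_k$. Hence the induction hypothesis applied to these $m-1$ numbers gives $S^{\beta}=\bigl(\sum_{i=1}^{m-1} a_i\bigr)^{\beta}\le \sum_{i=1}^{m-1}(2^{\beta}-1)^{i-1} a_i^{\beta}$. Substituting this into the previous estimate, the factor $(2^{\beta}-1)$ already split off multiplies each $(2^{\beta}-1)^{i-1}$ and advances every exponent from $i-1$ to $i$, so the right-hand side becomes $a_0^{\beta}+\sum_{i=1}^{m-1}(2^{\beta}-1)^{i} a_i^{\beta}=\sum_{i=0}^{m-1}(2^{\beta}-1)^{i} a_i^{\beta}$, closing the induction.

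I expect the only genuine subtlety to be the bookkeeping that makes the recursion self-similar: one must verify that discarding the dominant term $a_0$ leaves a family whose ordering condition is exactly the hypothesis for $m-1$ parties, and confirm that the single constant $(2^{\beta}-1)$ peeled off at each stage is precisely the factor that shifts the exponent by one. Everything else reduces to Lemma \ref{lqa1} together with the monotonicity of $t\mapsto t^{\beta}$, so no analytic input beyond the two facts already established is required.
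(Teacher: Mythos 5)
Your proposal is correct and follows essentially the same route as the paper: reduce to the numerical inequality via Theorem \ref{tqa}, peel off the dominant term $a_0$, apply Lemma \ref{lqa1} with $x=S/a_0$ (licensed by the hypothesis at $i=0$), and close by induction on the tail, which absorbs the peeled factor $(2^{\beta}-1)$ into the exponents. Your write-up is in fact slightly more careful than the paper's (explicit base case, the degenerate $a_0=0$ case, and the check that the tail inherits the nested ordering condition), but the argument is the same.
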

\begin{proof}
	\indent According to the Lemma \ref{lqa1}, we need to prove
	\begin{align}
	\sum_{i=0}^{m-1}[T_q^a(\rho_{P|P_{j_i}})]^{\beta}\le \sum_{i=0}^{m-1}(2^{\beta}-1)^{i}[T_q^a(\rho_{P|P_{j_i}})]^{\beta},\label{ftqa1}
	\end{align}
	Next we use the mathematical induction to prove the inequality (\ref{ftqa1}). When $m=2,$ similar to the proof of the Theorem \ref{tqa5}, we see that the theorem is correct. When $m\ge 2$, due to the condition (\ref{ftqa0}), we have
	\begin{align}
	0\le \frac{\sum_{k=1}^{m-1} T_q^a(\rho_{P|P_{j_k}})}{ T_q^a(\rho_{P|P_{j_0}})}\le 1,
	\end{align}
	then we have
	\begin{align}
	&	[T_q^a(\rho_{P|P_{j_0}\cdots P_{j_{m-1}}})]^{\beta}\nonumber\\
	\le & [T_q^a(\rho_{P|P_{j_0}})]^{\beta}[1+(2^{\beta}-1)(\frac{\sum_{k=1}^{m-1} T_q^a(\rho_{P|P_{j_k}})}{ T_q^a(\rho_{P|P_{j_0}})})^{\beta}]
	\nonumber\\
	=& [T_q^a(\rho_{P|P_{j_0}})]^{\beta} +(2^{\beta}-1)(\sum_{k=1}^{m-1}T_q^a(\rho_{P|P_{j_k}}))^{\beta},\label{tqa6}
	\end{align}
	At last, due to the mathematical induction, we have,
	\begin{align}
	(\sum_{k=1}^{m-1}T_q^a(\rho_{P|P_{j_k}}))^{\beta}\le \sum_{k=1}^{m-1}(2^{\beta}-1)^{k-1}[T_q^a(\rho_{P|P_{j_k}})]^{\beta},\label{tqa7}
	\end{align}
	combing the inequality (\ref{tqa6}) and (\ref{tqa7}), we finish the proof.
\end{proof}
\section{Conclusion}
\indent In this article, we investigate the general monogamy inequalities for the GW states in terms of T$q$EE. First we present an analytical formula for the T$q$EE and T$q$EEoA of the reduced density matrix of the GW state in terms of any partitions, then we present a monogamy inequality in terms of the squared T$q$EE for the reduced density matrices of the GW states, we also present a polygamy inequality in terms of the T$q$EE for the reduced density matrices of the GW states. Then we present generalized monogamy relations in terms of T$q$EE for the GW states. At last, we present tighter polygamy relations for the GWV states in terms of T$q$EEoA when $q=2.$ These results are meaningful as the GW states are in arbitrary n-qudit systems. Due to the importance of the study on the higher dimensional multipartite entanglement systems, and there are few results that the monogamy relations are valid for higher dimensional systems, our results can provide provide a reference for future work on the study of multiparty quantum entanglement. 
\bibliographystyle{IEEEtran}
\bibliography{re}
\end{document}